\documentclass{techrep}

\usepackage{lineno}
\usepackage{amsfonts}
\usepackage{amsthm}

\newcommand{\comment}[1]{}

\newtheorem{theorem}{Theorem}{\bfseries}{\itshape}

{\bfseries}{\itshape}

{\bfseries}{\itshape}

{\bfseries}{\itshape}

{\bfseries}{\itshape}

\newtheorem{obs}{Observation}{\bfseries}{\itshape}

\newtheorem{clm}{Claim}{\bfseries}{\itshape}

\newtheorem{assumption}{Assumption}{\bfseries}{\rm}

\theoremstyle{definition}

\newtheorem{definition}{Definition}

{\bfseries}{\rm}

\theoremstyle{plain}

\newenvironment{subproof}[1][\proofname]{%
  \begin{proof}[#1]%
}{%
  \end{proof}%
}

\usepackage{verbatim}
\usepackage{algorithm2e}
\usepackage[utf8]{inputenc}
\usepackage{microtype}
\usepackage{amsthm,amsmath,graphicx}
\usepackage{caption}
\usepackage{subcaption}
\usepackage[letterpaper]{hyperref}
\usepackage[table,dvipsnames]{xcolor}
\definecolor{linkblue}{named}{Blue}
\hypersetup{colorlinks=true, linkcolor=linkblue,  anchorcolor=linkblue,
	citecolor=linkblue, filecolor=linkblue, menucolor=linkblue,
	urlcolor=linkblue} 
\usepackage{wasysym}
\usepackage{graphicx}
\usepackage{caption}
\usepackage{enumitem}
\usepackage{thmtools, thm-restate}
\usepackage{wrapfig}
\usepackage{float}
\usepackage{stackrel}
\usepackage{fixfoot}
\usepackage[utf8]{inputenc}
\usepackage[english]{babel}
\usepackage{mathrsfs}  
\usepackage{venturis}
\usepackage{placeins}

\newtheorem{thm}{Theorem}
\newtheorem{lem}[thm]{Lemma}

\newlength\problemsep
\newcommand{\john}[1]{\textcolor{purple}{}}

\newcommand{\thex}{\ensuremath{%
\mathchoice{\includegraphics[page=98, height=2ex]{DirTh6.pdf}}
{\vcenter{\hbox{\includegraphics[page=98, height=2ex]{DirTh6.pdf}}}}
{\includegraphics[page=98, height=1.5ex]{DirTh6.pdf}}
{\includegraphics[page=98, height=1ex]{DirTh6.pdf}}
}}

\newcommand{\hex}{\ensuremath{%
\mathchoice{\includegraphics[page=97, height=2ex]{DirTh6.pdf}}
{\vcenter{\hbox{\includegraphics[page=97, height=2ex]{DirTh6.pdf}}}}
{\includegraphics[page=97, height=1.5ex]{DirTh6.pdf}}
{\includegraphics[page=97, height=1ex]{DirTh6.pdf}}
}}

\title{The Spanning Ratio of the Directed $\Theta_6$-Graph is 5}

\author{Prosenjit Bose, Jean-Lou De Carufel, Darryl Hill, John Stuart}

\begin{document}
\maketitle
\begin{abstract}
    Given a finite set $P\subset\mathbb{R}^2$, the directed Theta-6 graph, denoted $\vec{\Theta}_6(P)$, is a well-studied geometric graph due to its close relationship with the Delaunay triangulation. The $\vec{\Theta}_6(P)$-graph is defined as follows: the plane around each point $u\in P$ is partitioned into $6$ equiangular cones with apex $u$, and in each cone, $u$ is joined to the point whose projection on the bisector of the cone is closest. Equivalently, the $\vec{\Theta}_6(P)$-graph contains an edge from $u$ to $v$ exactly when the interior of $\nabla_u^v$ is disjoint from $P$, where $\nabla_u^v$ is the unique equilateral triangle containing $u$ on a corner, $v$ on the opposite side, and whose sides are parallel to the cone boundaries. It was previously shown that the spanning ratio of the $\vec{\Theta}_6(P)$-graph is between $4$ and $7$ in the worst case (Akitaya, Biniaz, and Bose \emph{Comput. Geom.}, 105-106:101881, 2022). We close this gap by showing a tight spanning ratio of 5. This is the first tight bound proven for the spanning ratio of any $\vec{\Theta}_k(P)$-graph. Our lower bound models a long path by mapping it to a converging series. Our upper bound proof uses techniques novel to the area of spanners. We use linear programming to prove that among several candidate paths, there exists a path satisfying our bound. 
\end{abstract}
\clearpage
\section{Introduction}
\john{Journal to-do: Figures for lemma 8, assumption 9}Studying the distance-preserving properties of particular families of graphs in the plane has been an active area of research for decades. This area has numerous applications, such as motion planning, or wireless routing. We will focus on geometric graphs where each vertex is a point in the plane, and each edge is weighted by the Euclidean distance between its endpoints. Given a geometric graph $G$, the spanning ratio of a subgraph $H$ is a measure of how well distances of $G$ are preserved. In particular, a subgraph $H$ is called a $c$-{\it spanner} of $G$ if for every edge $uv$ in $G$, the shortest path in $H$ from $u$ to $v$ is at most $c$ times the length of the edge $uv$ in $G$ \cite{narasimhan_smid_2007}. The smallest such $c$ is referred to as the {\it spanning ratio} of $H$. Finding a tight bound on the spanning ratio of different geometric graphs is a fundamental problem in computational geometry. A particular graph of interest is the {\it Delaunay triangulation}. It has an edge $uv$ when there exists a disk with $u$ and $v$ on its boundary and no vertices in its interior. The spanning ratio of the Delaunay triangulation has been studied for nearly 40 years, and the first constant upper bound proven was $\frac{1+\sqrt{5}}{2}\pi\approx 5.083$ \cite{DBLP:journals/dcg/DobkinFS90}. Currently, the spanning ratio of the Delaunay triangulation is only known to be between $1.593$ \cite{DBLP:conf/cccg/XiaZ11} and $1.998$ \cite{DBLP:journals/siamcomp/Xia13}, and finding the exact value remains a major open problem \cite{narasimhan_smid_2007}. The first plane geometric graph proven to have a constant spanning ratio is the $\square$-Delaunay graph, whose empty region is a square \cite{DBLP:conf/compgeom/Chew86}. The proof was then generalized to the $\triangle$-Delaunay graph, whose empty region is an equilateral triangle \cite{DBLP:journals/jcss/Chew89}. 

Delaunay triangulations have many desirable properties. A Delaunay triangulation is a plane graph, meaning that it contains roughly $3|V|$ undirected edges, therefore the average vertex degree is approximately $6$. However, it is possible for a vertex to have arbitrarily large degree. In many applications, such as signal transmission, such a large degree may be problematic. Instead, we focus on a closely-related geometric graph with bounded outdegree of $6$.  

Given a finite set $P\subset \mathbb{R}^2$ and integer $k>1$, we can construct a geometric graph on $P$ with at most $k$ outgoing edges per vertex as follows. For each vertex $u\in P$, partition the plane into $k$ equiangular cones about $u$. Then, add an edge from $u$ to the {\it nearest} vertex in each cone about $u$. Repeat this process for each vertex of $P$. If {\it nearest} is measured by the Euclidean distance, then the resulting graph is known as a Yao-$k$ graph, first studied by Yao \cite{DBLP:journals/siamcomp/Yao82}. The directed Theta-$k$ graph, denoted $\vec{\Theta}_k$, is defined similarly, except that {\it nearest} is measured using a distance function whose unit disk is a regular $k$-gon. In particular, the $\vec{\Theta}_k$-graph contains an edge from $u$ to $v$ in cone $C$ when the orthogonal projection of $uv$ onto the cone bisector is minimal among all vertices $v$ in cone $C$. We refer to the undirected version as the $\Theta_k$-graph. Since being introduced by Clarkson \cite{10.1145/28395.28402} and Keil and Gutwin \cite{10.5555/2805869.2805935}, $\Theta_k$-graphs have been studied extensively. For $k\geq 4$, the $\Theta_k$-graph is known to be a spanner of the complete graph \cite{DBLP:conf/wads/BarbaBCRV13,DBLP:journals/dcg/BoseCHS24, DBLP:journals/tcs/BoseCMRV16, 10.1145/28395.28402, 10.5555/2805869.2805935,SeidelRuppert}. On the other hand, the $\vec{\Theta}_k$-graph is known to be a constant spanner for $k\geq 6$ \cite{DBLP:journals/comgeo/AkitayaBB22,DBLP:journals/tcs/BoseCMRV16, SeidelRuppert} and also for $k=4$ \cite{DBLP:journals/dcg/BoseCHS24}. Despite extensive research, tight bounds on the spanning ratio are only known for the $\Theta_{4k+2}$-graph when $k\geq 1$. In general, results claiming tight bounds on the spanning ratio of any geometric graph are rare. To the best of our knowledge, the only other geometric graphs with known tight spanning ratios are the three generalized Delaunay graphs whose empty regions are either a triangle, square, or hexagon. See Table~\ref{tab:compTight}.

\begin{table}[h!]
\centering
\begin{tabular}{||c c c c||} 
 \hline
 Reference & Graph & Spanning Ratio & Notes \\ [0.5ex] 
 \hline\hline
 \cite{DBLP:journals/tcs/BoseCMRV16} & $\Theta_{4k+2}$ ($k\geq 1$) & $1+2\sin(\theta/2)$ & Also for constrained version\cite{DBLP:conf/latin/BoseR14}\\
 \hline
 \cite{DBLP:journals/jcss/Chew89} & $\triangle$-Delaunay & $2$ & Also for affine transformations \cite{bose2025tightroutingspanningratios,DBLP:journals/jgaa/LubiwM19}  \\
 
 \hline
 \cite{DBLP:journals/comgeo/BonichonGHP15} & $\square$-Delaunay & $\sqrt{4+2\sqrt{2}}\approx 2.613$ & Also for affine transformations \cite{DBLP:journals/tcs/BoseCN25,DBLP:conf/esa/RenssenSSW23} \\
 \hline
 \cite{DBLP:journals/jocg/Perkovic0T21} & $\thex$-Delaunay & $2$ & \\[1ex] 
 \hline
\end{tabular}
\caption{Other geometric graphs with known tight spanning ratios. }
\label{tab:compTight}
\end{table}

\john{another table for journal version} The $\vec{\Theta}_6$-graph has a special symmetry. For two points $u,v\in P$, we can define $\nabla_u^v$ to be the unique equilateral triangle whose sides have slopes $-\sqrt{3}$, $0$, and $\sqrt{3}$ and $u$ is on a corner and $v$ is on the opposite side. Then the $\vec{\Theta}_6$-graph contains an edge from $u$ to $v$ exactly when the interior of $\nabla_u^v$ contains no vertices of $P$. Readers familiar with Delaunay triangulations may notice that the $\vec{\Theta}_6$-graph is essentially two rotated copies of a $\triangle$-Delaunay graph. The $\triangle$-Delaunay graph, often referred to as the TD-Delaunay graph, has a spanning ratio of $2$ in the worst case, hence the Theta-$6$ graph inherits this upper bound. On the other hand, the $\vec{\Theta}_6$-graph does not preserve distances nearly as well as its undirected counterpart. The first bounds on the spanning ratio of the $\vec{\Theta}_6$-graph were given by Akitaya, Biniaz, and Bose \cite{DBLP:journals/comgeo/AkitayaBB22}. For any $\epsilon>0$, they construct a $\vec{\Theta}_6$-graph whose spanning ratio is at least $4-\epsilon$. Then, they show that constructing a path in an arbitrary $\vec{\Theta}_6$-graph is not as straightforward as selecting the outgoing edge in the cone containing the destination. This technique is known as {\it greedy routing}, and it produces efficient paths in $\vec{\Theta}_k$-graphs only when $k\geq 7$ \cite{SeidelRuppert}. Greedy routing in $\vec{\Theta}_6$-graphs can lead to a spiraling path of unbounded length. Instead, the authors of \cite{DBLP:journals/comgeo/AkitayaBB22} give an upper bound of $7$ for the spanning ratio of $\vec{\Theta}_6$-graphs using the following technique. Given an edge from $a$ to $b$, they show that the {\it greedy} path from $b$ to $a$ has length at most $6$ times the length of the edge from $a$ to $b$. Then, for any two vertices $s,t$, it suffices to find an intermediate vertex $x$ such that there are short paths from $s$ to $x$ and $t$ to $x$. Indeed, a path from $s$ to $t$ can be formed by concatenating the path from $s$ to $x$ with the greedy paths corresponding to each edge on the path from $t$ to $x$. The authors leave it as an open problem to close the gap between $4$ and $7$. As the title of our paper suggests, we close this gap by showing that the $\vec{\Theta}_6$-graph has a tight spanning ratio of $5$.

Our proof uses elements from \cite{DBLP:journals/comgeo/AkitayaBB22} along with several novel techniques in the area of spanners. Most notably, we use linear programming to prove the existence of a short path. At a high level, our proof is an induction on the distance between points, and it proceeds as follows. We define a special region $R$ between $s$ and $t$, and show that if there exists a vertex $x\in R$, then induction provides two paths, from $s$ to $x$ and from $x$ to $t$, which can be concatenated to form a sufficiently short path from $s$ to $t$. Then, for the rest of the proof, we assume that no vertices are in the interior of $R$. Importantly, we are able to show that any edge $uv$ on the greedy path that crosses $R$ must pay a ``toll''. More precisely, $v$ is guaranteed to be at least twice as close to $t$ as $u$. We leverage this property by applying induction after a candidate path crosses $R$. The difficulty is that there are multiple candidate paths, and determining which path is sufficiently short is non-trivial. Our solution is to assume that there is no sufficiently short candidate path. This allows us to prove inequalities corresponding to each candidate path, then we can use linear programming to show that the system is infeasible. In essence, assuming that there is no short path leads to a contradiction.

\section{Preliminaries}

The geometric graphs we study are weighted, directed graphs whose vertex sets contain points in the plane. We denote the directed edge from vertex $u$ to vertex $v$ by $uv$, and with a slight abuse of notation, when $p,q\in\mathbb{R}^2$, we denote the segment from $p$ to $q$ as $pq$. In general, for an edge $uv$ and a norm $\|\cdot\|$, the value $\|uv\|$ is exactly the factor by which the unit disk of $\|\cdot\|$ must be scaled such that the unique translate with $u$ in the center has $v$ on the boundary. For example, $\|\cdot\|_2$ refers to the Euclidean norm whose unit disk is a circle with radius $1$. We also let $\|uv\|_x$ (resp. $\|uv\|_y$) denote the absolute difference in $x$-coordinate (resp. $y$-coordinate) from $u$ and $v$. The {\it Manhattan norm}, $\|uv\|_1:=\|uv\|_x+\|uv\|_y$, has a tipped square with side length $\sqrt{2}$ as  its unit disk. For an edge $uv$ in a geometric graph, we define its weight to be $\|uv\|_2$, the usual distance from $u$ to $v$. Let $\|\cdot\|_{\hex}$ denote the {\it hexagonal norm} whose unit disk is a regular hexagon with east and west vertices equal to $(1,0)$ and $(-1,0)$, respectively. On the other hand, let $\|\cdot\|_{\thex}$ denote the {\it tipped hexagonal norm} whose unit disk is a tipped regular hexagon with north and south vertices equal to $(0,\frac{2}{\sqrt{3}})$ and $(0,\frac{-2}{\sqrt{3}})$, respectively. See Figure~\ref{fig:normsAndCones}. Notice that we always have $\|uv\|_{\thex}\leq\|uv\|_2\leq \|uv\|_{\hex}$, and equality holds exactly when $uv$ has slope $-\sqrt{3}$, $0$, or $\sqrt{3}$. 
\begin{figure}[ht!]
        \centering
        \includegraphics[page=79, scale=0.9]{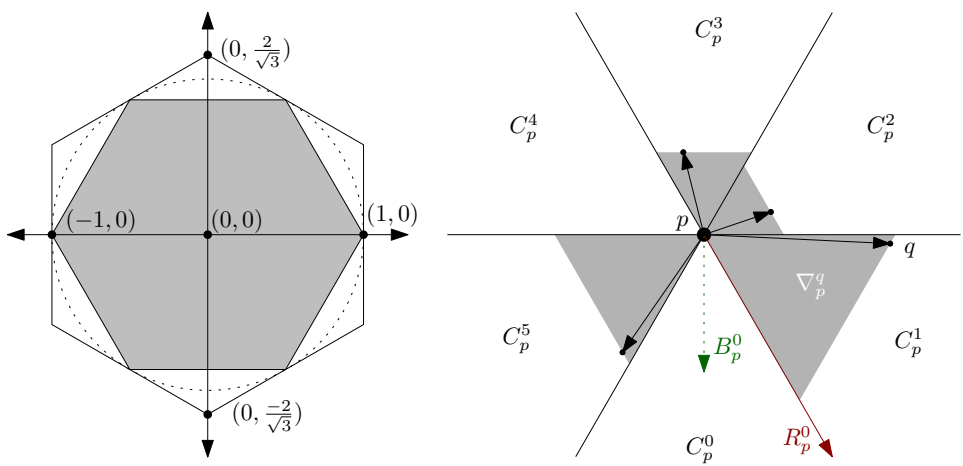}
        \caption{Left: The dotted circle represents the unit disk in the standard Euclidean norm ($\|\cdot\|_2$). The shaded inner hexagon is the unit disk in the $\|\cdot\|_{\hex}$-norm. The outer tipped hexagon is the unit disk in the $\|\cdot\|_{\thex}$-norm. Right: The six cones about $p$ are labeled in counterclockwise order from $ C_p^0$ below $p$. The dotted green bisector is $B_p^0$ and the red ray is $R_p^0$. The empty shaded triangle in $ C_p^1$ is $\nabla_p^q$. Each shaded region is empty. \label{fig:normsAndCones}}
    \end{figure}

In a geometric graph $G$, we define the length of a path to be the sum of the weights of its edges. For any norm $\|\cdot\|$ and path $\mathcal{P}$, we define $\|\mathcal{P}\|:=\sum_{uv\text{ on }\mathcal{P}}\|uv\|$. If $v$ is a vertex on a path $\mathcal{P}$, then we define the {\it truncated path} $\mathcal{P}_v$ as the subpath of $\mathcal{P}$ which ends at $v$. For two vertices $u,v$ in a geometric graph $G$, we let $d_G(u,v)$ denote the length of the shortest path in $G$ from $u$ to $v$. Recall that all edges are weighted by the Euclidean norm. Then for a constant $c\geq 1$, $G$ is said to be a $c$-spanner if for all points $u,v$ in $G$, we have $d_G(u,v)\leq c\|uv\|_2$. The {\it spanning ratio} of $G$ is the least $c$ for which $G$ is a $c$-spanner. The spanning ratio of a class of graphs $\mathcal{G}$ is the least $c$ for which all graphs in $\mathcal{G}$ are $c$-spanners.

For any point $p$, consider the three lines passing through $p$ with slopes $-\sqrt{3}$, $0$, and $\sqrt{3}$. These lines partition the plane into six cones about $p$, which we label counter-clockwise by $ C_p^i$ for $i\in\mathbb{Z}_6$ starting with $ C_p^0$ directly below $p$. Here, $\mathbb{Z}_6$ refers to the integers modulo $6$. See Figure~\ref{fig:normsAndCones}. For each cone $ C_p^i$, we denote its bisector by $B_p^i$. Throughout this paper, we adopt the convention that cones are closed sets, otherwise we write $\text{int}(R)$ to denote the interior of region $R$. For $i\in\mathbb{Z}_6$, we define the ray $R_p^i:= C_p^i\cap C_p^{i+1}$. For points $p,q\in\mathbb{R}^2$ with $q\in C_p^i$, we let $\nabla_p^q$ denote the triangular region $\{r\in  C_p^i\mid \|pr\|_{\hex}\leq\|pq\|_{\hex}\}$. For any equilateral triangle $T$, we let $\|T\|$ denote the side length of $T$. Notice that $\|\nabla_p^q\|=\|pq\|_{\hex}$. For any finite point set $P\subset\mathbb{R}^2$, we make the {\em general position assumption} that no two vertices lie on a line with slope $-\sqrt{3}$, $0$, or $\sqrt{3}$. The directed Theta-6 graph of $P$, denoted $\vec{\Theta}_6(P)$, contains an edge from vertex $u$ to vertex $v$ exactly when $\text{int}(\nabla_u^v)\cap P=\emptyset$. Intuitively, a vertex has an outgoing edge to the nearest vertex in each of its six cones, measured using the $\|\cdot\|_{\hex}$-norm. For any vertices $u,v\in P$, we let $\pi(u,v)$ denote the {\it greedy path} from $u$ to $v$. In particular, every edge $ab$ of $\pi(u,v)$ satisfies $b\in C_a^i\iff v\in C_a^i$. If $w$ is a vertex along the path $\pi(u,v)$, then the subpath terminating at $w$ is denoted by $\pi(u,v)_w$. Note that if $w\neq v$, then $\pi(u,v)_w$ is not necessarily the same as $\pi(u,w)$.

\section{Lower Bound}
\begin{lem}\label{lem:SRLBDT6}
For any $\epsilon>0$, there exists a directed Theta-$6$-graph with spanning ratio at least $5-\epsilon$.
\end{lem}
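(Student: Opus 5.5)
We construct, for a parameter $n$, a finite point set $P_n$ with two distinguished vertices $s,t$ at unit Euclidean distance. The construction should satisfy $d_{\vec{\Theta}_6(P_n)}(s,t)\geq \sum_{k=0}^{n} a_k$, where $\sum_k a_k$ is a convergent series whose limit is $5$. Taking $n$ large and perturbing into general position then gives spanning ratio at least $5-\epsilon$.

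The overall shape follows the $4-\epsilon$ construction of Akitaya, Biniaz, and Bose: $s$ and $t$ are placed so that the short edges out of $s$ all lead away from $t$, so any path is forced to circle around $t$. I would place $s$ and $t$ on a horizontal line, with $t$ slightly right of $s$. Points are then added one at a time along a spiral around $t$, so that each new point $p_{k+1}$ is the unique nearest vertex, in the hexagonal norm, in the relevant cone of $p_k$. The rest of the geometry keeps every other outgoing edge of $p_k$ pointing to a vertex already visited or farther from $t$. The spiral contracts toward $t$ geometrically, with ratio $r<1$ per turn. The edge lengths then form a geometric series: roughly a fixed constant for the first segment (about $2$ or $3$ times $\|st\|_2$), plus contributions that shrink by the factor $r$ after each turn.

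The ratio $r$ and the placement of the spiral must be tuned so that the total sums to $5$. The way to find the parameters is to start from the extremal configuration suggested by the upper-bound intuition in the introduction: a greedy edge crossing the region $R$ must roughly halve its distance to $t$ (the ``toll''). This suggests a spiral with ratio $1/2$ in the hexagonal norm. We then set up the series, for example $1+2+1+\tfrac12+\cdots$ in suitable units, and choose coordinates whose Euclidean lengths, with each edge chosen along a cone boundary where $\|\cdot\|_2=\|\cdot\|_{\hex}$, sum to $5\|st\|_2$.

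The key steps, in order, are:
\begin{enumerate}
\item Fix the initial gadget near $s$ and $t$ and the self-similar spiral unit, scaled by $r$ each turn.
\item Verify the emptiness of every triangle $\nabla_u^v$ for the intended edges. This determines $\vec{\Theta}_6(P_n)$ exactly, and in particular shows which edges do \emph{not} exist, for example the edges into $t$ from anything except the innermost spiral points.
\item Argue that every $s$--$t$ path either traverses the spiral or takes a detour that is at least as long. This can be done by a cut argument: every edge entering a small neighborhood of $t$ originates from a spiral point deep inside.
\item Sum the geometric series and check that the limit is $5$.
\end{enumerate}
The $\epsilon$ absorbs the truncation of the series, the final edge into $t$, and the perturbation needed for general position.

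The main obstacle will be step 3: showing that no shortcut exists, since a $\vec{\Theta}_6$-graph has many edges. I would handle it by classifying all outgoing edges of each spiral point by cone, using the self-similarity so that only one period needs checking. I would then show by induction on the turn that the shortest path from $s$ to the $k$-th spiral point is at least the partial sum of the series.
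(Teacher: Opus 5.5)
\textbf{There is a genuine gap: you never construct a point set.} The coordinates, the emptiness of the triangles $\nabla_u^v$, the absence of shortcuts, and the value of the limit are all deferred to ``tuning.'' The series $1+2+1+\tfrac12+\cdots$ is not tied to any geometry, so nothing shows that a configuration realizing it exists. Your step~3 is likewise a hope rather than an argument.

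The heuristic you would tune by also points the wrong way. The toll of Lemma~\ref{lem:RCrossing} is an upper-bound device, not a description of the extremal path. Theorem~\ref{thm:sr} itself rules out a shortest path that circles $t$ repeatedly at the scale of $\|st\|_2$. One turn around $t$ at hexagonal distance between $\rho/2$ and $\rho$ has length roughly $2.5\rho$ or more, since it surrounds a hexagon of perimeter $3\rho$. So a ratio-$\tfrac12$ spiral costs at least roughly $5\rho$. Together with an initial stretch of length $2$ to $3$, this exceeds $5\|st\|_2$ unless $\rho$ is at most about $0.6\|st\|_2$. In that case your plan does not say where the rest of the length comes from.

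The missing idea is where the length actually comes from. Your guess that an initial stretch contributes about $3$ is right. In the paper the path $s,a,b,c$ goes only halfway around $t$, along three sides of the unit hexagon centred at $t$. The $\delta$-perturbations make each successor the hexagonally nearest vertex in the cone containing $t$ and all later vertices.

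The last $2$ comes not from winding but from a zigzag $p^0,q^0,p^1,\dots,p^k$ confined to a wedge of angle about $2\delta$ along $tc$, with edges of slopes $\pm(\sqrt3-\delta)$. Its edges fall into two parallel classes, so its length equals $\|p^0r\|_2+\|rp^k\|_2$ for a point $r$ within $17\delta$ of $(\tfrac12,\tfrac{\sqrt3}{2})$. That is about $2$ for a net advance of about $1$. The contraction ratio between consecutive zigzag points tends to $1$ and $k\to\infty$ as $\delta\to0$, so no fixed ratio such as $\tfrac12$ plays a role.

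This design also makes your step~3 immediate. Every point of $P$ lies on the path, and from each vertex all later vertices lie in a single cone. So each vertex's only edge to a later vertex goes to its successor, and every $s$--$t$ path must traverse the whole chain. The perturbations are then absorbed by $\|\cdot\|_1$ triangle-inequality estimates, giving length at least $5-72\delta$ while $\|st\|_2\le 1$.
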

\begin{proof}
    Let $\epsilon>0$ be arbitrarily small. We will construct a $\vec{\Theta}_6$-graph with spanning ratio at least $5-\epsilon$, shown in Figure~\ref{fig:DirT6SRLB}.
    \begin{figure}[ht!]
        \centering
        \includegraphics[page=99, scale=0.8]{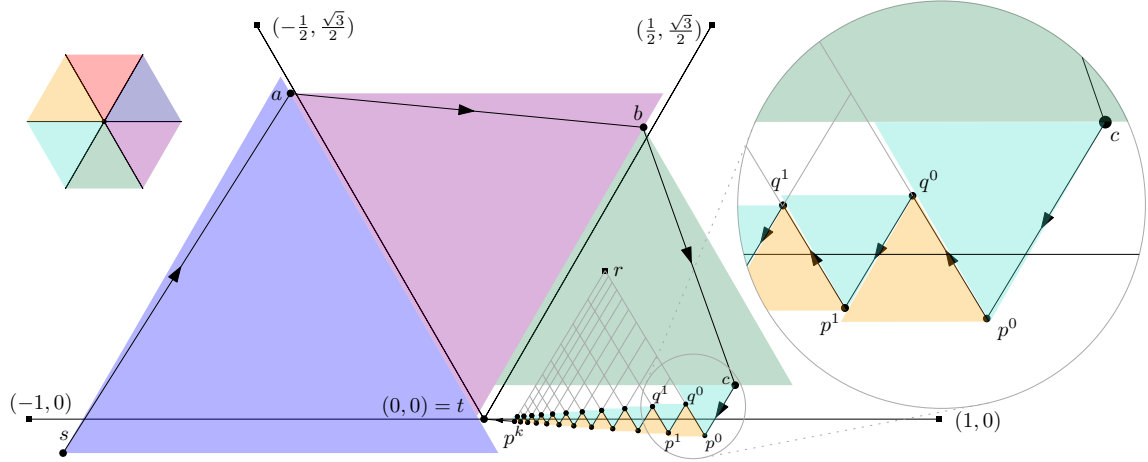}
        \caption{Graph construction achieving a spanning ratio lower bound of $5-\epsilon$. The shortest path from $s$ to $t$ is the greedy path $s,a,b,c,p^0,q^0,p^1,q^1,...,,q^{k-1},p^k,t$. Top Left: The colour coding for edges based on direction. \label{fig:DirT6SRLB}}
    \end{figure}
    For the following construction, set $0<\delta<\frac{\epsilon}{72}$. Define the points $t:=(0,0)$, $s:=(-1,0)+\delta(1,-1),a:=(-\frac{1}{2},\frac{\sqrt{3}}{2})+\delta(1,-2),b:=(\frac{1}{2},\frac{\sqrt{3}}{2})+\delta(-2,-3), c:=(1,0)+\delta(-6,1)$. Next, let $p^0$ be the unique point such that $tp^0$ has slope $-\delta$ and $p^0c$ has slope $\sqrt{3}-\delta$. Next, let $q^0$ be on segment $ct$ such that $p^0q^0$ has slope $\delta-\sqrt{3}$. For $i>0$, let $q^i$ be on $ct$ such that $\frac{\|q^{i-1}q^i\|_2}{\|tq^{i}\|_2}=\frac{\|q^0c\|_2}{\|tc\|_2}$. For $i>0$, also define $p^i$ to be on $tp^0$ such that $p^iq^i$ has slope $\delta-\sqrt{3}$. We let $k$ be the least integer such that $p_k$ has $x$-coordinate less than $\delta$. We have constructed the point set $P:=\{t,s,a,b,c\}\cup\{p^0,p^1,...,p^k\}\cup \{q^0,q^1,...,q^{k-1}\}$. In $\vec{\Theta}_6(P)$, the shortest path from $s$ to $t$ is $s,a,b,c,p^0,q^0,p^1,q^1,...q^{k-1},p^k,t$ since from the perspective of any vertex on the path, all subsequent vertices lie in the same cone. Now we will lower bound the length of this path. First, we have  
    \begin{align}\label{eq:sa}
        1 &= \|(-1,0)(-\frac{1}{2},\frac{\sqrt{3}}{2})\|_2 &\text{since $\sqrt{\left(\frac{1}{2}\right)^2+\left(\frac{\sqrt{3}}{2}\right)^2}=1$,}\nonumber\\
        &\leq \|(-1,0)s\|_2 + \|sa\|_2 + \|a(-\frac{1}{2},\frac{\sqrt{3}}{2})\|_2 &\text{by triangle inequality,}\nonumber\\
        &\leq \|(-1,0)s\|_1 + \|sa\|_2 + \|a(-\frac{1}{2},\frac{\sqrt{3}}{2})\|_1&\text{since $\|\cdot\|_2\leq \|\cdot\|_1$,}\nonumber\\
        &= 2\delta + \|sa\|_2 + 3\delta &\text{by definition.}
    \end{align}
    which implies $\|sa\|_2\geq 1-5\delta$. Using the same technique, we can obtain the following bounds: $\|ab\|_2\geq 1-8\delta$ and $\|bc\|_2\geq 1-12\delta$. Next, let $r$ be the unique point such that $p^k r$ has slope $\sqrt{3}-\delta$ and $rp^0$ has slope $\delta-\sqrt{3}$. Then by unfolding, notice that the path $p^0,q^0,p^1,q^1,...,p^{k-1},q^{k-1},p^k$ has length $\|p^0r\|_2+\|rp^k\|_2$. See Figure~\ref{fig:DirT6SRLB}. By the triangle inequality, we have $\|p^0(1,0)\|_1\leq \|p^0c\|_x+\|p^0c\|_y+\|c(1,0)\|_1\leq 2\delta+2\delta+7\delta=11\delta$ and $\|tp^k\|_1= \|tp^k\|_x + \|tp^k\|_y\leq \delta+\delta=2\delta$. Next, we claim that $r$ approaches $(\frac{1}{2},\frac{\sqrt{3}}{2})$ as $\delta\to 0$.
    \begin{clm}\label{claim:lb}
        We have $\|r(\frac{1}{2},\frac{\sqrt{3}}{2})\|_2\leq 17\delta$.
    \end{clm}
    \begin{subproof}
        Indeed, first let $r^1$ be the unique point such that $p^kr^1$ has slope $\sqrt{3}$ and $r^1p^0$ has slope $-\sqrt{3}$. See Figure~\ref{fig:rNearTop}. Since $\max(\|p^kr^1\|_x,\|r^1p^0\|_x)\leq 1$ vertical distance from $r$ to the lines through $p^kr^1$ and $r^1p^0$ is at most $\delta$. Then $r^1$ must be in the ball of radius $\delta$ in the $\|\cdot\|_1$-norm centered at $r$, meaning $\|rr^1\|_1\leq\delta$. Next, let $r^2$ be the unique point such that $p^kr^2$ has slope $\sqrt{3}$ and $r^2(1,0)$ has slope $-\sqrt{3}$. Then we have $\|r^1r^2\|_2\leq \frac{2}{\sqrt{3}}\|p^0(1,0)\|_2$. Similarly $\|r^2(\frac{1}{2},\frac{\sqrt{3}}{2})\|_2\leq \frac{2}{\sqrt{3}}\|tp^k\|_2$. Putting this together, we obtain 
        \begin{align*}
            \|r(\frac{1}{2},\frac{\sqrt{3}}{2})\|_2\leq \|rr^1\|_1+\|r^1r^2\|_2+\|r_2(\frac{1}{2},\frac{\sqrt{3}}{2})\|_2\leq \delta+\frac{2}{\sqrt{3}}(11\delta)+ \frac{2}{\sqrt{3}}(2\delta)\leq  17\delta.
        \end{align*}
        \begin{figure}[ht!]
            \centering
            \includegraphics[page=96, scale=0.8]{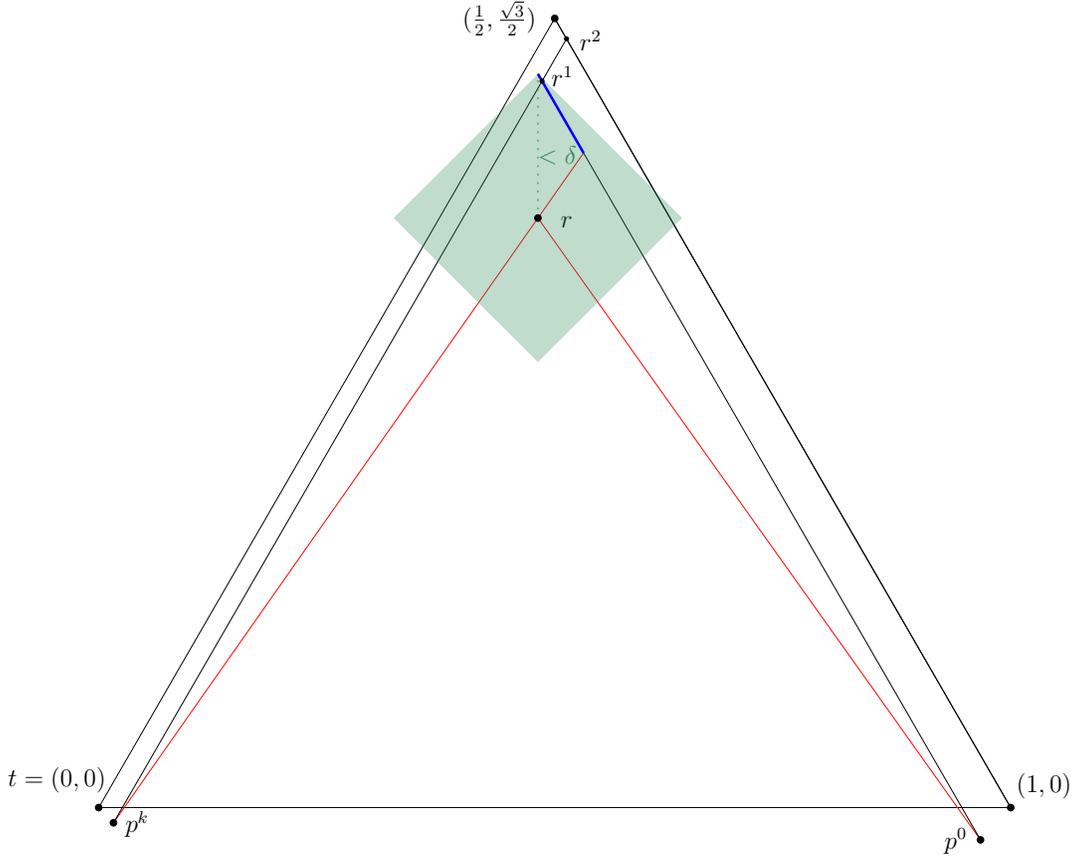}
            \caption{In Claim~\ref{claim:lb}, we define $r^1$ and $r^2$ in order to show that $r$ approaches $(\frac{1}{2},\frac{\sqrt{3}}{2})$. The red lines have slope $\pm (\sqrt{3}-\delta)$, whereas the black lines have slope $\pm\sqrt{3}$. In this example, the line through $r^1 p^0$ has a larger vertical distance to $r$ than the line through $p^k r^1$. This distance is at most $\delta$. Since $r^1$ is to the right of $r$ in this example, $r^1$ must lie on the blue segment, which is inside the green disk centered at $r$ with radius at most $\delta$ in the $\|\cdot\|_1$-norm. If instead $r^1$ were to the left of $r$, then the diagram would be similar.} \label{fig:rNearTop}
        \end{figure}
    \end{subproof}
    Now we can apply the same technique used to prove inequality~\eqref{eq:sa} to obtain $\|p^0r\|_{2}\geq 1-28\delta$ and $\|rp^k\|_2\geq 1-19\delta$. Therefore the total length of the path is at least 
    \begin{align*}
        d_{\vec{\Theta}_6(P)}(s,t) &=\|sa\|_2+\|ab\|_2+\|bc\|_2+\|cp^0\|_2 + \left(\sum_{i=0}^{k-1}\|p^iq^i\|_2\right)+\|q^{k-1}p^k\|_2+\|p_k t\|_2\\
        &\geq (1-5\delta)+(1-8\delta)+(1-12\delta) + (1-28\delta)+(1-19\delta)\\
        &=5-72\delta>5-\epsilon.
    \end{align*}
    Then since $\|st\|_2\leq 1$, the spanning ratio of $\vec{\Theta}_6(P)$ is at least $5-\epsilon$.
\end{proof}

\section{Upper Bound}\label{sec:UpperBound}

The goal of this section is to prove Theorem~\ref{thm:sr}.
\begin{theorem}\label{thm:sr}
    For any point set $P\subset\mathbb{R}^2$ in general position, we have $d_{\vec{\Theta}_6(P)}(s,t)\leq 5\|st\|_{\thex}\leq 5\|st\|_2$ $\forall s,t\in P$.
\end{theorem}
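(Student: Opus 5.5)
We argue by induction on the rank of $\|st\|_{\thex}$ among all ordered pairs of $P$; since $P$ is finite and in general position this is a well-ordering. The second inequality $5\|st\|_{\thex}\le 5\|st\|_2$ is immediate from $\|\cdot\|_{\thex}\le\|\cdot\|_2$, so only the first needs proof. Fix $s,t$ and assume $d(x,y)\le 5\|xy\|_{\thex}$ for every pair with $\|xy\|_{\thex}<\|st\|_{\thex}$. By symmetry of the six cones under rotation by $60^\circ$ and reflection, we may take $t\in C_s^0$ with $t$ in a canonical half of that cone.

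If $\text{int}(\nabla_s^t)\cap P=\emptyset$, then $st$ is an edge and we are done. Otherwise we define a region $R$ between $s$ and $t$, contained in $\nabla_s^t$ and bounded by lines of slope $0,\pm\sqrt3$. We choose $R$ so that every $x\in R$ satisfies both $\|sx\|_{\thex},\|xt\|_{\thex}<\|st\|_{\thex}$ and $\|sx\|_{\thex}+\|xt\|_{\thex}\le\|st\|_{\thex}$. The natural choice is the set of $x$ for which the tipped-hexagonal triangle inequality is tight, a parallelogram spanned by the two directions of the $\|\cdot\|_{\thex}$-ball face containing $t-s$. If a vertex lies in $\text{int}(R)$, induction on $(s,x)$ and $(x,t)$ gives $d(s,t)\le 5\|sx\|_{\thex}+5\|xt\|_{\thex}\le 5\|st\|_{\thex}$.

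So we assume $\text{int}(R)\cap P=\emptyset$. The key structural lemma, the ``toll'', is then the following: any edge $uv$ of a greedy path toward $t$ that crosses $R$ satisfies $\|vt\|_{\thex}\le\frac12\|ut\|_{\thex}$. We prove it by comparing the empty triangle $\nabla_u^v$ with $R$. Because $\nabla_u^v$ must avoid the corner of $R$ next to $t$ while $u$ lies on the far side, $v$ is forced into a small triangle near $t$.

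Next we consider several candidate paths:
\begin{enumerate}
\item the greedy path $\pi(s,t)$ itself, up to its first edge crossing $R$, followed by induction from the endpoint of that edge;
\item the edge from $s$ in the cone of $t$ to the nearest vertex $a$, followed by an induction path from $a$;
\item the path that reaches a vertex near $t$ and then uses the bound of \cite{DBLP:journals/comgeo/AkitayaBB22}, namely that the greedy path reversing an edge $ab$ has length at most $6\|ab\|$, applied to a short edge pointing into $t$ (for instance the edge from $t$ in the cone containing $s$).
\end{enumerate}
For each candidate we bound the length in terms of a few parameters: the $\|\cdot\|_{\thex}$ coordinates of $s,t$, the first vertices $a$, $b$ hit, and the ratio produced by the toll lemma. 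For the first candidate we bound the greedy segment's length by its $\|\cdot\|_1$-type projections, since greedy edges monotonically decrease a linear functional. We then suppose that every candidate has length greater than $5\|st\|_{\thex}$. Each such assumption is an inequality that is linear in the normalized parameters once we fix the combinatorial case, meaning which cone each intermediate vertex lies in. Adding the geometric constraints coming from emptiness of $R$ and of the relevant $\nabla$ triangles, we obtain a linear program in each case, and we show it is infeasible by exhibiting a nonnegative combination of the inequalities (Farkas certificate) that yields $0>0$.

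The main obstacle is twofold. First, we must choose $R$ and the candidate paths so that the toll lemma holds. Second, we must ensure that all length bounds are genuinely linear, which will likely require replacing Euclidean lengths by $\|\cdot\|_{\hex}$ upper bounds and splitting into cases by cone. We would first attempt the case where the greedy path from $s$ spirals around $t$ (the tight configuration of Lemma~\ref{lem:SRLBDT6}), since the constant $5$ must be exactly attained there and the LP certificate is tight.
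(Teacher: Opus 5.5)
Your outline matches the paper's architecture only at the level of slogans: induction, an empty region from $\|\cdot\|_{\thex}$-collinearity, a toll for crossing it, greedy paths truncated at the first crossing, and infeasibility of ``every candidate is long''. The content that makes the argument close is missing, and two ingredients you do state are wrong.

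First, your toll lemma is false as stated. Put $t$ at the origin and $u=(-\epsilon,-h)$ just left of $B_t^0$. Let the nearest vertex of $u$ in $C_u^3\ni t$ be $v=(h/\sqrt3-2\delta,-\delta)$, with $0<\epsilon<\delta\ll h$ and $h$ small. The edge $uv$ is greedy toward $t$ and passes through $\text{int}(Q)$ across both sides of $Q$ at $t$. Yet $\|vt\|_{\thex}\approx h/\sqrt3$, while $\frac12\|ut\|_{\thex}\approx\frac{\sqrt3}{4}h$. What holds is Lemma~\ref{lem:RCrossing}, $2\|vt\|_{\thex}\le\|uv\|_{\hex}$, which is tight in this example. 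Since the argument has no slack at the constant $5$, putting $\|\cdot\|_{\thex}$ where $\|\cdot\|_{\hex}$ belongs is not harmless.

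Second, your region is misdescribed. The $\|\cdot\|_{\thex}$-tight parallelogram has its sides along cone bisectors, not slopes $0,\pm\sqrt3$, and it sticks out of $\nabla_s^t$ into a neighbouring cone of $s$. The paper also needs more than this parallelogram. Its region $R=Q\cup\nabla_s^p\cup\nabla_s^q$ requires separate detour arguments for the two triangles, using the edge $sy$, Lemma~\ref{lem:WithinR}, and the constraint $2.5y_0+3.5y_1<1$ derived in Lemma~\ref{lem:empty}.

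The decisive gap is the candidate set. Use the paper's normalization: $s\in C_t^0$ to the right of $B_t^0$, and $sy$ the edge in $C_s^3$. Take $y_0,y_1\approx 0$, and let $\pi(y,t)$ make one full counterclockwise turn before crossing $R$, with all first-vertex distances $Y_i$ close to $1$. Then:
\begin{itemize}
\item your greedy candidate already has length above $6$ before the induction term (its bound \eqref{eq:YPath} is about $7$);
\item your second candidate is bounded only by $1+5\|yt\|_{\thex}\approx 6$;
\item your third candidate can never be the short one. If $a$ is reached by induction, then $5\|sa\|_{\thex}+6\|at\|_2>5\|st\|_{\thex}$, by the triangle inequality for $\|\cdot\|_{\thex}$ and $\|\cdot\|_{\thex}\le\|\cdot\|_2$. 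This factor-$6$ reversal is exactly how \cite{DBLP:journals/comgeo/AkitayaBB22} arrived at $7$.
\end{itemize}

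The missing idea is a second, clockwise candidate: the edge $sx$ in $C_s^4$, followed by $\pi(x,t)$ up to its first crossing of $R_t^0$. It comes with the coupling inequalities \eqref{noj} and \eqref{aj}. The empty triangles $C_t^i\cap C_{y^i}^{i-2}$ behind the long edges of one path confine the other path near $t$, so the two paths cannot both be long.

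Even with this, the paper needs considerably more:
\begin{itemize}
\item Assumptions~\ref{asm:q} and~\ref{asm:x}, with separate arguments when they fail;
\item Claims~\ref{obs:Cone0x0} and~\ref{obs:x3};
\item a $2\times2\times5$ case split;
\item extra shortcut paths in the four cases where the two-path system is still feasible.
\end{itemize}
You write down no concrete system and no Farkas certificate, so ``the LP is infeasible'' is an assertion rather than a step. As it stands, the proposal is a research plan rather than a proof.
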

Let $P\subset\mathbb{R}^2$ be a point set in general position, and let $G:=\vec{\Theta}_6(P)$. Recall that the $\|\cdot\|_{\hex}$-norm is used to define $G$, however Theorem~\ref{thm:sr} is expressed in terms of the $\|\cdot\|_{\thex}$-norm. The following lemma describes the exact relationship between $\|\cdot\|_{\hex}$ and $\|\cdot\|_{\thex}$.

\begin{lem}\label{lem:hexthex}
    Let $u\in\mathbb{R}^2$, $i\in\mathbb{Z}_6$, and $v\in C_u^i$. Then $$\|uv\|_{\hex}=\|uv\|_{\thex}+\frac{1}{2}\min(\|C_u^i\cap C_v^{i-2}\|,\|C_u^i\cap C_v^{i+2}\|).$$
\end{lem}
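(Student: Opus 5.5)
By rotational symmetry we may assume $i=0$ and $u$ is the origin. A rotation by a multiple of $60^\circ$ about $u$ maps the unit disks of $\|\cdot\|_{\hex}$ and $\|\cdot\|_{\thex}$ to themselves. It also permutes the cones, sending $C_p^j$ to $C_{p'}^{j+1}$, and so it preserves the equilateral triangles $C_u^i\cap C_v^{i\pm2}$ together with their side lengths. The plan is then a direct coordinate computation. Write $v=(x,-y)$, where $v\in C_u^0$ means $y\ge \sqrt{3}|x|$. I will compute each of the three quantities in the statement explicitly and check that the identity holds.

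\textbf{Step 1: the two norms.} The unit hexagon of $\|\cdot\|_{\hex}$ has vertices at angles $0^\circ,60^\circ,\dots$. Its boundary inside the downward cone $C_u^0$ (directions between $-60^\circ$ and $-120^\circ$) is therefore the single horizontal edge $y=-\sqrt{3}/2$. This gives $\|uv\|_{\hex}=\frac{2y}{\sqrt{3}}$.

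The tipped hexagon has vertices at angles $90^\circ,30^\circ,-30^\circ,-90^\circ,\dots$. Inside $C_u^0$ its boundary consists of the two edges meeting at the south vertex $(0,-\frac{2}{\sqrt3})$, joining it to $(\pm1,-\frac{1}{\sqrt3})$. These edges lie on the lines $-Y=\mp X/\sqrt3+\ldots$, i.e.\ on $\sqrt3\,(-Y)+|X|=2$ after scaling. Solving for the scale factor gives $\|uv\|_{\thex}=\frac{\sqrt{3}y+|x|}{2}$.

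\textbf{Step 2: the triangles.} With the labeling of the paper, $C_v^{2}$ is the cone at $v$ between the rightward horizontal ray and the $60^\circ$ ray, and $C_v^{-2}=C_v^4$ is its mirror image on the left. The region $C_u^0\cap C_v^{2}$ is bounded by the horizontal ray from $v$, the $60^\circ$ ray from $v$, and the right boundary of $C_u^0$ (the ray at angle $-60^\circ$ from $u$). Its angles are all $60^\circ$, so it is equilateral. Its horizontal side runs from $v$ to the point at height $-y$ on the line $Y=-\sqrt3 X$, so its side length is $\frac{y}{\sqrt3}-x$. Symmetrically, $\|C_u^0\cap C_v^{4}\|=\frac{y}{\sqrt3}+x$. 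The minimum of the two is $\frac{y}{\sqrt3}-|x|$, which is nonnegative precisely because $v\in C_u^0$.

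\textbf{Step 3: combine.} We get
\[
\|uv\|_{\thex}+\tfrac12\Bigl(\tfrac{y}{\sqrt3}-|x|\Bigr)=\tfrac{\sqrt3 y}{2}+\tfrac{y}{2\sqrt3}=\tfrac{2y}{\sqrt3}=\|uv\|_{\hex}.
\]
The only real care is in Step 2. One must identify correctly which cones $C_v^{i\pm2}$ are meant, confirm that their intersection with $C_u^i$ is a bounded equilateral triangle (rather than an empty or unbounded region), and handle the degenerate boundary cases $v\in R_u^{0}$ or $v\in R_u^{5}$, where one triangle collapses to the point $v$ and has side length $0$. I would also double-check the rotation-invariance claim for the cone indices, since $i\pm2$ must shift consistently with $i$.
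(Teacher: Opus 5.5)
Your proof is correct, but it verifies the identity by explicit coordinates rather than by the paper's synthetic argument. The paper also reduces to $i=0$, and additionally assumes, by reflection, that $v$ lies to the right of $B_u^0$. It lets $a$ be the orthogonal projection of $v$ onto $R_u^0$, and $b$ the point where the horizontal line through $v$ meets $R_u^0$. It then reads off three facts: $\|uv\|_{\thex}=\|ua\|_2$; $\frac12\|C_u^0\cap C_v^2\|=\|ab\|_2$, because the horizontal side $vb$ of that triangle meets $R_u^0$ at $60^\circ$; and $\|uv\|_{\hex}=\|ub\|_2$. The lemma then collapses to $\|ua\|_2+\|ab\|_2=\|ub\|_2$ along a single ray. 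Your formulas $\tfrac{2y}{\sqrt3}$, $\tfrac{\sqrt3 y+|x|}{2}$ and $\tfrac{y}{\sqrt3}\mp x$ are exactly these segment lengths written out; for instance, $\|ua\|_2=\langle v,(\tfrac12,-\tfrac{\sqrt3}{2})\rangle$ when $x\ge 0$. Your route buys full explicitness: it justifies the norm identifications the paper takes for granted. The $|x|$ also handles both sides of the bisector and the degenerate cases $v\in R_u^0\cup R_u^5$ without a further reflection step. The paper's route buys brevity and shows why the identity holds: all three quantities are lengths measured along the same cone boundary. Your closing worry about the index shift does not arise. A counterclockwise $60^\circ$ rotation about $u$ sends $C_p^j$ to $C_{p'}^{j+1}$ for every point $p$, so the indices $i$ and $i\pm2$ shift together.
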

\begin{proof}
    Suppose without loss of generality that $i=0$ and $v$ is to the right of the bisector $B_u^0$. See Figure~\ref{fig:hexthex}. Then $\|C_u^0\cap C_v^{2}\|\leq \|C_u^0\cap C_v^{4}\|$. Let $a$ be the orthogonal projection of $v$ on $R_u^0$. Also let $b$ be the intersection of $R_u^0$ with a horizontal line through $v$. Then 
    $$\|uv\|_{\thex}+\frac{1}{2}\|C_u^0\cap C_v^{2}\|=\|ua\|_{2}+\|ab\|_{2}=\|ub\|_2=\|uv\|_{\hex}.$$
    The other cases follow by symmetry.
    \begin{figure}[ht!]
        \centering
        \includegraphics[page=80, scale=0.9]{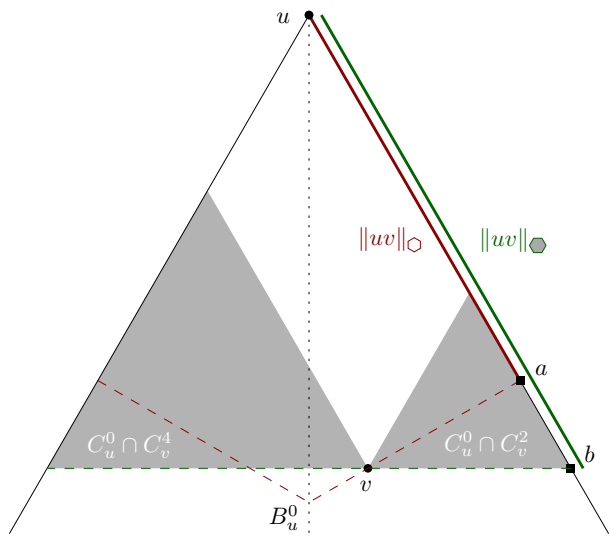}
        \caption{Lemma~\ref{lem:hexthex} with $v\in C_u^0$. The dotted vertical ray is the bisector $B_u^0$. The horizontal green dashed segment represents all points $p\in C_u^0$ such that $\|up\|_{\hex}=\|uv\|_{\hex}$. The red dashed segments represent all points $p\in C_u^0$ such that $\|up\|_{\thex}=\|uv\|_{\thex}$.} \label{fig:hexthex}
    \end{figure}
\end{proof}
The next lemma finds use in proving Theorem~\ref{thm:sr}
\begin{lem}\label{lem:WithinR}
    Let $u,v,t\in P$ and $i\in\mathbb{Z}_6$. Suppose $u,v\in C_t^i\cup C_t^{i+1}$ between $B_t^i$ and $B_t^{i+1}$. Then if $v\in  C_u^{i+3}\cup C_u^{i+4}$, we have $\|uv\|_{\hex}\leq 2(\|ut\|_{\thex}-\|vt\|_{\thex})$.
\end{lem}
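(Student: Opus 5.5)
The plan is to reduce to a fixed cone configuration and then observe that, on the relevant regions, every quantity in the inequality is a linear function of coordinates. Linear inequalities can then be checked at finitely many extreme directions.

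\emph{Step 1 (normalize).} Rotations by multiples of $60^\circ$ permute the cones and preserve both $\|\cdot\|_{\hex}$ and $\|\cdot\|_{\thex}$, since both unit disks are invariant under such rotations. So we may assume $i=0$ and translate so that $t$ is the origin. With the paper's labelling, $C_t^0$ is the cone of directions $[240^\circ,300^\circ]$ and $C_t^1$ is the cone of directions $[300^\circ,360^\circ]$. The bisectors are $B_t^0$ at $270^\circ$ and $B_t^1$ at $330^\circ$. The hypothesis therefore says that $u$ and $v$ lie in the wedge $W$ of directions $[270^\circ,330^\circ]$ from $t$.

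\emph{Step 2 ($\|\cdot\|_{\thex}$ is linear on $W$).} The tipped hexagon has vertices at the directions $90^\circ,150^\circ,\dots,270^\circ,330^\circ$, at distance $\frac{2}{\sqrt3}$ from the centre. The wedge $W$ is exactly the sector spanned by one edge of this hexagon, namely the edge joining the vertices at $270^\circ$ and $330^\circ$. This edge has outward unit normal $w$ in direction $300^\circ$ and lies at distance $\frac{2}{\sqrt3}\cos 30^\circ=1$ from the centre. Hence $\|pt\|_{\thex}=\langle p-t,w\rangle$ for every $p\in W$. Applying this to $u$ and $v$ gives
$$\|ut\|_{\thex}-\|vt\|_{\thex}=\langle u-v,w\rangle=\langle v-u,w'\rangle,$$
where $w'=-w$ is the unit vector in direction $120^\circ$. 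This is the only place the hypothesis on $u,v$ is used, and it is the crux of the argument: without it the difference of two $\|\cdot\|_{\thex}$ values would not be linear in $v-u$.

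\emph{Step 3 (compare on $C_u^3\cup C_u^4$).} Write $x=v-u$. The hypothesis $v\in C_u^3\cup C_u^4$ means $x$ has direction in $[60^\circ,180^\circ]$. We must show $\|x\|_{\hex}\le 2\langle x,w'\rangle$. On each of the two cones $[60^\circ,120^\circ]$ and $[120^\circ,180^\circ]$, the hexagon norm is linear, because each cone is spanned by a single edge of the hexagon. For example, on the first cone $\|x\|_{\hex}=\frac{2}{\sqrt3}x_y$. The right-hand side $2\langle x,w'\rangle$ is also linear. It therefore suffices to check the inequality on the three boundary unit vectors.
\begin{itemize}
\item At $60^\circ$: the left side is $1$ and the right side is $2\cos 60^\circ=1$, so equality holds.
\item At $120^\circ$: the left side is $1$ and the right side is $2$.
\item At $180^\circ$: the left side is $1$ and the right side is $2\cos 60^\circ=1$, so equality holds.
\end{itemize}
By linearity on each cone, the inequality holds throughout $C_u^3\cup C_u^4$, which proves the lemma.

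The only delicate point is keeping the cone labels and orientations consistent, in particular identifying correctly which edge of the tipped hexagon the wedge between $B_t^i$ and $B_t^{i+1}$ corresponds to. The equality cases at $60^\circ$ and $180^\circ$ serve as a sanity check that the constant $2$ is sharp.
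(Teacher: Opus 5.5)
Your proof is correct and is essentially the paper's argument: your Step 2 identity $\|ut\|_{\thex}-\|vt\|_{\thex}=\langle v-u,w'\rangle$ is exactly the paper's quantity $\|uv'\|_2$, where $v'$ is the orthogonal projection of $v$ onto $R_u^3$. The only difference is in the last step. You verify $\|v-u\|_{\hex}\le 2\langle v-u,w'\rangle$ by linearity on each cone and a check at the rays $60^\circ,120^\circ,180^\circ$, whereas the paper argues synthetically via an auxiliary point $p$ on a boundary ray of $\nabla_u^v$, giving $\|uv\|_{\hex}=\|up\|_2=2\|up'\|_2\le 2\|uv'\|_2$; your version has the small advantage of handling $v\in C_u^3$ and $v\in C_u^4$ uniformly.
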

\begin{proof}
    Without loss of generality, suppose $i=0$. If $v\in C_u^3$, then define $p:=R_u^{2}\cap R_v^1$, otherwise define $p:=R_u^{2}\cap R_v^1$. Let $v'$ and $p'$ be the orthogonal projections of $v$ and $p$ on $R_u^{3}$. See Figure~\ref{fig:WithinR}. Then $$\|uv\|_{\hex}=\|up\|_{2}=2\|up'\|_{2}\leq 2\|uv'\|_2=2 (\|ut\|_{\thex}-\|vt\|_{\thex}).$$
    \begin{figure}[ht!]
        \centering
        \includegraphics[page=82, scale=0.9]{DirTh6.pdf}
        \caption{For the proof of Lemma~\ref{lem:WithinR}, we assume $i=0$ and show the case when $v\in C_u^4$ on the left and $v\in C_u^3$ on the right. \label{fig:WithinR}}
    \end{figure}
\end{proof}

We prove Theorem~\ref{thm:sr} by induction on the hexagonal distance $\|\cdot\|_{\hex}$ between vertices. For the base case, consider the closest pair of vertices $s,t\in P$ (i.e. $\|st\|_{\hex}$ is minimal). Minimality implies $\text{int}(\nabla_s^t)$ must not contain any vertices of $P$. Therefore there must be an edge from $s$ to $t$, yielding $d_{G}(s,t)=\|st\|_2\leq 5\|st\|_{\thex}$. 

For the remainder of the proof, fix $s,t\in P$ and suppose that for any vertices $u,v\in P$, if $\|uv\|_{\hex}<\|st\|_{\hex}$, then $d_{G}(u,v)\leq 5\|uv\|_{\thex}$.  We will assume $\|st\|_{\hex}=1$ since our argument holds when scaling the point set. Without loss of generality, suppose $s\in C_t^0$, with $s$ being to the right of $B_t^0$. Define the triangle $T=C_t^0\cap C_s^2$, and denote its side length by $y_0:=\|T\|$. Notice that $y_0$ describes the {\it hex} distance from $s$ to the nearest boundary, and by Lemma~\ref{lem:hexthex}, we obtain $5\|st\|_{\thex}=5-2.5y_0$. Suppose towards a contradiction that $5\|st\|_{\thex}<d_G(s,t)$. Our approach is to analyse paths from $s$ to $t$, which will allow us to prove inequalities based on our faulty assumption. These inequalities will form a linear system, then linear programming will help us show that the system is infeasible. The first step in our proof is to establish an empty region which we refer to as $R$. 
\begin{definition}[Empty Region $R$]
    Consider the points $p:= B_s^3\cap B_t^1$ and $q=B_s^4\cap B_t^0$. See Figure~\ref{fig:SideBySide}. Let $Q$ denote the parallelogram with vertices $s,q,t,p$, and define the region $R:=Q\cup \nabla_s^p\cup \nabla_s^q$.
\end{definition}

\begin{figure}[ht!]
    \centering
    \includegraphics[page=89, scale=0.9]{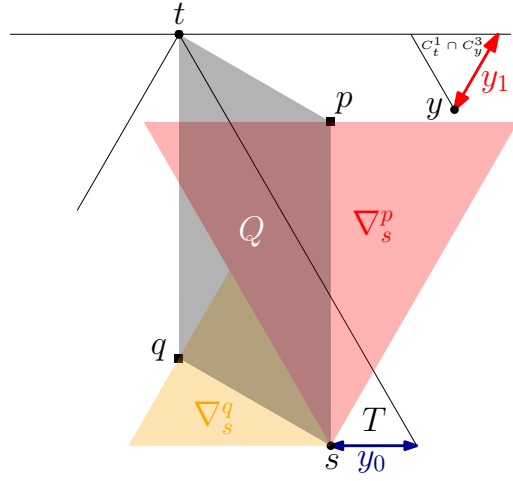}
    \caption{The red triangle is $\nabla_s^p$, the orange triangle is $\nabla_s^q$, and the shaded parallelogram is $Q$. Their union is $R$, whose interior can be assumed to not contain any vertices by Lemma~\ref{lem:empty}. Furthermore, there exists a vertex $y\in C_t^1$ close to the horizontal line through $t$. The triangle $T$ has side-length $y_0$ and the triangle $C_t^1\cap C_y^3$ has side-length $y_1$. \label{fig:SideBySide}}
\end{figure}

Let $y$ be the vertex such that $sy$ is the edge in $C_s^3$ and consider the quantity $y_1:=1-\|\nabla_s^y\|$ describing the hexagonal distance $\|\cdot\|_{\hex}$ from $y$ to the horizontal line through $t$. In addition to proving that $R$ is empty, the following lemma proves that $y_0$ and $y_1$ cannot both be large. Geometrically, $s$ must be near $R_t^0$ and $y$ must be near $R_t^1$. See Figure~\ref{fig:SideBySide}.

\begin{lem}\label{lem:empty}
    If $d_G(s,t)>5\|st\|_{\thex}$, then $\text{int}(R)$ does not contain any vertices of $P$. Furthermore, $y\in C_t^1$ and
    \begin{align}
        2.5y_0+3.5y_1<1 \label{eq:firstAssumption}
    \end{align}
\end{lem}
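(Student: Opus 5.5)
The plan is to argue by contradiction throughout, using only the induction hypothesis (every pair $u,v$ with $\|uv\|_{\hex}<1$ satisfies $d_G(u,v)\le 5\|uv\|_{\thex}$), the assumption $d_G(s,t)>5\|st\|_{\thex}=5-2.5y_0$, and Lemmas~\ref{lem:hexthex} and~\ref{lem:WithinR}. Coordinates are $t=(0,0)$ with $s$ on the horizontal line $y=-\frac{\sqrt{3}}{2}$, right of $B_t^0$, at hex-distance $y_0$ from $R_t^0$. The proof has three parts: emptiness of $R$, the claim $y\in C_t^1$, and inequality~\eqref{eq:firstAssumption}.

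\textbf{Emptiness of $R$.} Suppose a vertex $x$ lies in $\text{int}(R)$.
\begin{itemize}
\item \emph{Case $x\in\text{int}(Q)$.} The unit disk of $\|\cdot\|_{\thex}$ is a tipped hexagon whose corners point along the cone bisectors. Hence the triangle inequality for $\|\cdot\|_{\thex}$ is tight exactly when the two summand vectors lie between the same two consecutive bisector directions. The parallelogram $Q$, spanned at $s$ and $t$ by $B_s^3,B_s^4,B_t^0,B_t^1$, is precisely the set where $\|sx\|_{\thex}+\|xt\|_{\thex}=\|st\|_{\thex}$. I will check that $\|sx\|_{\hex}<1$ and $\|xt\|_{\hex}<1$ for $x\in\text{int}(Q)$; both lie inside the triangle $\nabla_s^t$. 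Induction then gives $d_G(s,t)\le 5\|sx\|_{\thex}+5\|xt\|_{\thex}=5\|st\|_{\thex}$, a contradiction.
\item \emph{Case $x\in\text{int}(\nabla_s^p)\cup\text{int}(\nabla_s^q)$.} Take instead the edge $sw$ in that cone of $s$ ($C_s^3$ or $C_s^4$). Since $\nabla_s^w\subseteq\nabla_s^x$, we have $\|sw\|_2\le\|sw\|_{\hex}<\|sp\|_{\hex}$ (or $\|sq\|_{\hex}$). I bound $d_G(s,t)\le\|sw\|_{\hex}+5\|wt\|_{\thex}$ using induction on $wt$.
\end{itemize}
The estimate needed in the second case is $\|wt\|_{\thex}\le\|st\|_{\thex}-\frac15\|sw\|_{\hex}$. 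The idea is that moving from $s$ into $C_s^3$ toward $p$ decreases the $\thex$-distance to $t$ at a rate comparable to Lemma~\ref{lem:WithinR}, which gives a factor $\frac12$, better than the needed $\frac15$. I will verify this with a projection argument like that of Lemma~\ref{lem:WithinR}, projecting $w$ onto the bisector direction $B_s^3$ (respectively $B_s^4$).

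\textbf{Showing $y\in C_t^1$.} Here $y$ is the endpoint of the edge from $s$ in $C_s^3$, so $\|sy\|_2\le\|sy\|_{\hex}=1-y_1$. Since $\nabla_s^y$ is empty and $R$ is empty, $y$ lies on the top side of $\nabla_s^y$, outside $\text{int}(R)$. If $y\notin C_t^1$, then $y$ lies above the horizontal line through $t$, so $y_1\le 0$, or $y$ lies in $C_t^2$ or $C_t^0$ near $t$. In each subcase I will show $\|yt\|_{\hex}<1$ and that $\|sy\|_2+5\|yt\|_{\thex}\le 5-2.5y_0$, contradicting the assumption.

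\textbf{Inequality~\eqref{eq:firstAssumption}.} With $y\in C_t^1$ at hex-height $y_1$ below $t$, concatenating the edge $sy$ with the inductive path from $y$ to $t$ gives
$$5-2.5y_0<(1-y_1)+5\|yt\|_{\thex}.$$
Inequality~\eqref{eq:firstAssumption} then follows once I establish $\|yt\|_{\thex}\le 0.6+0.9y_1-\dots$, or more precisely the bound that makes the right-hand side at most $4+3.5y_1$. To get it, I will use that $y$ lies outside $\text{int}(\nabla_s^p)$, which pushes $y$ toward $R_t^1$, and compute $\|yt\|_{\thex}$ via Lemma~\ref{lem:hexthex} as $\|yt\|_{\hex}$ minus half the distance to the nearer cone boundary.

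The main obstacle is this last computation. It requires pinning down exactly where $y$ can lie, using that $y\notin\text{int}(R)$ and that $\|\nabla_s^y\|=1-y_1$. From that position it must extract the constant $3.5$, which presumably comes from the slope relations of the boundary of $\nabla_s^p$. I expect to parametrize $y$ by $y_1$ and its horizontal offset and then reduce to a linear inequality in the two parameters. The extreme case should be $y$ at the corner of $\nabla_s^p$ or on $R_t^1$.
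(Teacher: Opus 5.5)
\textbf{Main gap: the third part.} Inequality~\eqref{eq:firstAssumption} is never derived, and the target you state for it points the wrong way. From $5-2.5y_0<(1-y_1)+5\|yt\|_{\thex}$ you need the right-hand side to be at most $6-5y_0-3.5y_1$. Bounding it by $4+3.5y_1$ would only give $2.5y_0+3.5y_1>1$.

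The missing estimate is $\|yt\|_{\thex}\le 1-y_0-\tfrac12y_1$. It does not come from the slopes of $\nabla_s^p$; it follows from two facts.
\begin{itemize}
\item $y\in C_t^1\cap C_s^3$, which is a triangle with apex $t$ and side $1-y_0$, so $\|yt\|_{\hex}\le 1-y_0$.
\item $y$ lies above $B_t^1$. This is automatic once $\text{int}(R)$ is empty, or follows from Lemma~\ref{lem:WithinR} as in the paper. Hence the smaller triangle in Lemma~\ref{lem:hexthex} is $C_t^1\cap C_y^3$, of side $y_1$, and $\|yt\|_{\thex}=\|yt\|_{\hex}-\tfrac12y_1$.
\end{itemize}
Equivalently, $\|yt\|_{\thex}$ is the horizontal offset of $y$ from $t$, and for fixed $y_1$ it is largest where that horizontal line meets $R_s^2$. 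The constant $3.5=1+5\cdot\tfrac12$ comes from saving $y_1$ on the edge $sy$ and saving $\tfrac12y_1$, five times over, on the inductive path.

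\textbf{Smaller problems.}
\begin{itemize}
\item Your list of positions for $y\notin C_t^1$ omits $C_t^5$. This case is not excluded by $R$ being empty: whenever $y_0>0$, $C_s^3\cap C_t^5$ contains points just below the horizontal line through $t$ that lie outside $R$. It is handled by $\|yt\|_{\hex}\le y_0$, which gives $d_G(s,t)\le 1+5y_0<5-2.5y_0$. The ``above the horizontal line'' case is vacuous, since $t\in C_s^3$ forces $\|sy\|_{\hex}\le 1$.
\item In the triangle cases of the first part, the claimed rate $\tfrac12$ is false. Lemma~\ref{lem:WithinR} applies only while $w$ stays between $B_t^0$ and $B_t^1$. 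At the corner $q^0=R_q^5\cap R_s^4$ of $\nabla_s^q$ one has exactly $\|sq^0\|_{\hex}=4(\|st\|_{\thex}-\|q^0t\|_{\thex})$. Near the upper-left corner of $\nabla_s^p$ the ratio tends to $4$ as $y_0\to\tfrac12$. Since $4<5$, your direct estimate does survive: $w\mapsto\|sw\|_{\hex}+5\|wt\|_{\thex}$ is convex, so checking the corners of each triangle suffices. A projection argument alone will not do it.
\item $Q\not\subseteq\nabla_s^t$, because $q\in C_s^4$. However, $\|sx\|_{\hex},\|xt\|_{\hex}<1$ still holds on $\text{int}(Q)$.
\end{itemize}

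The paper avoids the $\nabla_s^p$ computation altogether. It empties $Q$, locates $y$ above $B_t^1$, and proves~\eqref{eq:firstAssumption}. Emptiness of $\text{int}(\nabla_s^p)$ then comes for free: $y_1<\frac{1-2y_0}{3}=1-\|sp\|_{\hex}$ gives $\|sy\|_{\hex}>\|sp\|_{\hex}$.
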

\begin{proof} 
    See Figure~\ref{fig:EmptyRegion}. If there exists a vertex $w\in Q$ with $w\notin\{s,t\}$, then $\|wt\|_{\hex}<1$. We obtain
    \begin{align*}
        d_{G}(s,t) &\leq d_{G}(s,w)+d_{G}(w,t) &\text{by triangle inequality,}\\
        &\leq 5\|sw\|_{\thex}+5\|wt\|_{\thex} &\text{by induction,}\\
        &=5\|st\|_{\thex} &\text{by $\|\cdot\|_{\thex}$ collinearity.}
    \end{align*}
    \begin{figure}[ht!]
        \centering
        \includegraphics[page=67, scale=1]{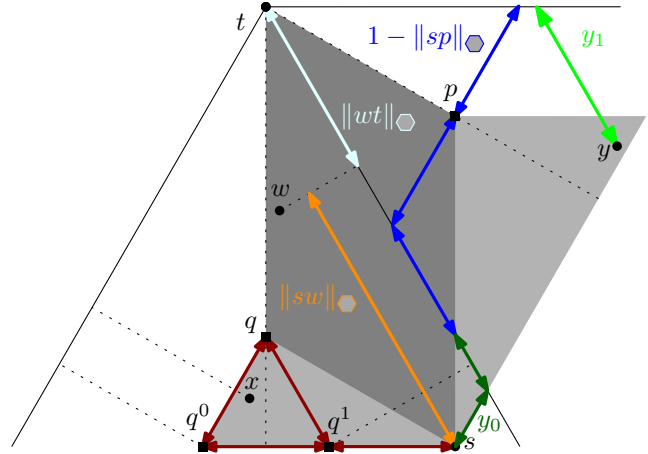}
        \caption{If there is a vertex $w$ in the dark shaded parallelogram, denoted $Q$, then induction yields paths from $s$ to $w$ and from $w$ to $t$. If there is an edge $sy$ in the light shaded triangle $\nabla_s^p\setminus Q$, then induction yields a path from $y$ to $t$. Similarly, if there is an edge $sx$ in the light shaded triangle $\nabla_s^q\setminus Q$, then induction yields a path from $x$ to $t$. \label{fig:EmptyRegion}}
    \end{figure}
    Therefore we may assume parallelogram $Q$ contains no vertices. Next we will rule out the possibility of $y\in C_t^5\cup C_t^0$. If $y\in C_t^5$, then we have 
    \begin{align*}
        d_G(s,t) &\leq \|sy\|_{\hex}+d_G(y,t) &\text{by triangle inequality,}\\
        &\leq 1+5y_0 &\text{by induction as $\|yt\|_{\thex}\leq\|yt\|_{\hex}=\|\nabla_t^y\|\leq\|T\|$}\\
        &< 5\|st\|_{\hex} &\text{since $y_0\leq 0.5$ and $5-2.5y_0=5\|st\|_{\hex}$.}
    \end{align*}
    On the other hand if $y\in  C_t^0$ or $y\in C_t^1$ below $B_t^1$, then we use Lemma~\ref{lem:WithinR} to obtain $\|sy\|_{\hex}\leq 2(\|st\|_{\thex}-\|yt\|_{\thex})$. Indeed, if $sy$ crosses $B_t^0$, then let $z:=sy\cap B_t^0$ and apply Lemma~\ref{lem:WithinR} to both $sz$ and $zy$. We obtain
    \begin{align*}
        d_G(s,t) &\leq \|sy\|_{\hex}+d_G(y,t) &\text{by triangle inequality,}\\
        &\leq 2(\|st\|_{\thex}-\|yt\|_{\thex}) + 5\|yt\|_{\thex} &\text{by Lemma~\ref{lem:WithinR} and induction,}\\
        &= 2\|st\|_{\thex} + 3\|yt\|_{\thex}\\
        &<5\|st\|_{\thex} &\text{since $\|yt\|_{\thex}<\|st\|_{\thex}$.}
    \end{align*}
    Therefore we may assume $y\in C_t^1$ above the bisector $B_t^1$. We have $\|yt\|_{\hex}\leq\|C_t^1\cap C_s^3\|=1-y_0$. Recall that $y_1:=1-\|\nabla_s^y\|$, and notice that we also have $y_1=\|C_t^1\cap C_y^{3}\|$. By Lemma~\ref{lem:hexthex}, $\|yt\|_{\thex}=\|yt\|_{\hex}-\frac{1}{2}y_1$. This gives us
    \begin{align}
        5-2.5y_0 &<d_{G}(s,t) &\text{by assumption,} \nonumber\\
        &\leq \|sy\|_{\hex}+d_{G}(y,t) &\text{by triangle inequality,} \nonumber\\
        &\leq (1-y_1)+ (5(1-y_0)-2.5y_1) &\text{by $\|\nabla_s^y\|=\|sy\|_{\hex}$ and induction.} \label{eq:AlmostfirstAssumption}
    \end{align}
    By re-arranging inequality \eqref{eq:AlmostfirstAssumption}, we obtain $2.5y_0+3.5y_1<1$, proving \eqref{eq:firstAssumption}. Notice that $2y_0+3(1-\|sp\|_{\hex})=1$ by symmetry of isosceles triangles. We have
    \begin{align*}
        y_1< \frac{1-2.5y_0}{3.5}<\frac{1-2y_0}{3}=1-\|sp\|_{\hex},
    \end{align*}
    which implies that the region $\text{int}(\nabla_s^p)$ contains no vertices. Indeed, if there were a point $u\in \text{int}(\nabla_s^p)$, then $1-y_1=\|sy\|_{\hex}\leq \|su\|_{\hex}\leq \|sp\|_{\hex}$. Thus far we have shown that $Q$ and $\nabla_s^p$ do not contain vertices in their interior, hence it remains to argue that $\text{int}(\nabla_s^q)$ is also empty. Towards a contradiction, assume there is a vertex $x\in \text{int}(\nabla_s^q)$. If $x$ is to the right of the bisector $B_t^0$, then we have
    \begin{align*}
        d_{G}(s,t)&\leq \|sx\|_{\hex}+d_{G}(x,t) &\text{by triangle inequality,}\\
        &\leq 2(\|st\|_{\thex}-\|xt\|_{\thex})+ 5\|xt\|_{\thex} &\text{by Lemma~\ref{lem:WithinR} and induction,} \\
        &=2\|st\|_{\thex}+3\|xt\|_{\thex}\\
        &< 5\|st\|_{\thex} &\text{since $\|xt\|_{\thex}<\|st\|_{\thex}$.}
    \end{align*}
    On the other hand if $x$ is to the left of the bisector $B_t^0$, then let $q^0:= R_q^5\cap R_s^4$ and $q^1:= R_q^0\cap R_s^4$. Notice that $\|sq^0\|_{\hex}=2\|sq^1\|_{\hex}$. We have 
    \begin{align*}
        d_{G}(s,t)&\leq \|sx\|_{\hex}+d_{G}(x,t) &\text{by triangle inequality,}\\
        &\leq \|sq^0\|_{\hex}+ 5\|xt\|_{\thex} &\text{by induction,}\\
        &\leq 2\|sq^1\|_{\hex}+ 5\|q^0t\|_{\thex} &\text{since $\|xt\|_{\thex}\leq\|q^0t\|_{\thex}$,}\\
        &=2(2(\|st\|_{\thex}-\|q^1t\|_{\thex}))+5\|q^1t\|_{\thex} &\text{by Lemma~\ref{lem:WithinR} and $\|q^0t\|_{\thex}=\|q^1t\|_{\thex}$,}\\
        &= 4\|st\|_{\thex} +  \|q^1t\|_{\thex}\\
        &\leq 5\|st\|_{\thex} &\text{since $\|q^1t\|_{\thex}\leq\|st\|_{\thex}$.}
    \end{align*}
    Therefore the region $\text{int}(\nabla_s^q)$ contains no vertices. This completes the proof. 
\end{proof}

Now that we have established an empty region $R$ and the position of $y$, we would like to analyse the greedy path from $y$ to $t$. Recall that for any vertices $u,v\in P$, the greedy path from $u$ to $v$ is denoted $\pi(u,v)$. Every edge $ab$ of $\pi(u,v)$ satisfies $b\in C_a^i\iff v\in C_a^i$. Intuitively, the greedy path simply follows the edge in the direction of the destination. In \cite{DBLP:journals/comgeo/AkitayaBB22} the authors show that for every edge $uv$ of $\pi(s,t)$, we have $\|ut\|_{\hex}>\|vt\|_{\hex}$. However, this does not guarantee that the greedy path is an efficient path to $t$. The main reason is that $\pi(s,t)$ may spiral around $t$ arbitrarily many times while making negligible progress. The authors of \cite{DBLP:journals/comgeo/AkitayaBB22} show how to avoid a spiraling greedy path by assuming there is an edge from $t$ to $s$ and using the empty triangle $\nabla_t^s$ as an obstruction. This is done in Lemmas 2, 3 and 4 from \cite{DBLP:journals/comgeo/AkitayaBB22}, which we restate here using our notation.
\begin{lem}[Lemma 2 in \cite{DBLP:journals/comgeo/AkitayaBB22}]\label{lem:adj}
    If edge $uv$ is on $\pi(u,t)$ and $u\in  C_t^i$, then $v\in C_t^{i-1}\cup  C_t^i\cup  C_t^{i+1}$.
\end{lem}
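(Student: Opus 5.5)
Since $u\in C_t^i$, the point $t$ lies in the opposite cone $C_u^{i+3}$. The plan is to use the greedy rule together with the emptiness of $\nabla_u^v$ to trap $v$ inside the triangle $\nabla_u^t$. We then observe that this triangle lies in the closed half-plane formed by the three cones $C_t^{i-1}\cup C_t^i\cup C_t^{i+1}$.

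\textbf{Step 1: $v$ lies in $\nabla_u^t$.} Because $uv$ is an edge of $\pi(u,t)$, the greedy rule places $v$ in the same cone of $u$ as $t$, so $v\in C_u^{i+3}$. The edge $uv$ is the outgoing edge of $u$ in that cone, so $v$ minimizes $\|u\cdot\|_{\hex}$ over the vertices of $C_u^{i+3}$; this is what $\text{int}(\nabla_u^v)\cap P=\emptyset$ means. Since $t$ itself is a vertex in $C_u^{i+3}$, we get $\|uv\|_{\hex}\le\|ut\|_{\hex}$. By the definition $\nabla_u^t=\{r\in C_u^{i+3}\mid \|ur\|_{\hex}\le\|ut\|_{\hex}\}$, this gives $v\in\nabla_u^t$.

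\textbf{Step 2: $\nabla_u^t$ lies in the right half-plane.} By the rotational symmetry of the construction it suffices to take $i=0$. Then $u$ lies below $t$, and $C_u^3$ is the upward cone of $u$. Its bisector is vertical, so the side of $\nabla_u^t$ opposite $u$ is the horizontal segment through $t$. Hence $\nabla_u^t$ lies in the closed half-plane on or below the horizontal line through $t$.

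That closed lower half-plane is exactly $C_t^5\cup C_t^0\cup C_t^1$, since the three consecutive $60^\circ$ cones below the horizontal line through $t$ make up $180^\circ$. For general $i$, the side of $\nabla_u^t$ through $t$ is perpendicular to $B_t^i$. It therefore lies on the cone-boundary line through $t$ separating $C_t^{i-1}\cup C_t^i\cup C_t^{i+1}$ from the other three cones, which gives $v\in C_t^{i-1}\cup C_t^i\cup C_t^{i+1}$.

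\textbf{Main obstacle.} There is essentially no hard step here. The only care needed is in the boundary conventions. Cones are closed, so the case $v$ on the line through $t$ is still covered. The general position assumption rules out degenerate cases in which $v$ lies on a cone boundary of $u$ and the greedy cone would be ambiguous.
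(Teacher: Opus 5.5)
Your proof is correct and is the natural argument for this lemma, which the paper imports from Akitaya, Biniaz, and Bose without reproving it. The greedy rule puts $v$ in the same cone $C_u^{i+3}$ as $t$; emptiness of $\mathrm{int}(\nabla_u^v)$ plus general position then makes $v$ no farther from $u$ than $t$ in the hexagonal norm, so $v\in\nabla_u^t$; and $\nabla_u^t$ lies in the closed half-plane bounded by the cone-boundary line through $t$ perpendicular to $B_t^i$, which is exactly $C_t^{i-1}\cup C_t^i\cup C_t^{i+1}$ (one cosmetic slip: your Step~2 heading says ``right half-plane'' where, for $i=0$, you mean the lower one).
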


\begin{lem}[Lemma 3 in \cite{DBLP:journals/comgeo/AkitayaBB22}]\label{lem:3from1}
    Let $s,t\in P$ and assume $ts$ is an edge of $\vec{\Theta}_6(P)$. Assume the vertices $a,b,c$ appear in this order in $\pi(s,t)$, not necessarily distinct or consecutive. Then if $a$ and $c$ are both in the same cone $C_t^i$ for $i\in\mathbb{Z}_6$, then $\text{int}(\nabla_a^b)\cap\text{int}(\nabla_t^c)=\emptyset$. 
\end{lem}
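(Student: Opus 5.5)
The plan is to normalize by symmetry, compare heights with respect to $t$, and then split according to whether the subpath of $\pi(s,t)$ from $a$ to $c$ winds around $t$.

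\textbf{Normalization.} By rotational symmetry we may take $i=0$, so $a,c\in C_t^0$ lie below $t$ and $t\in C_a^3$. Then $\nabla_t^c$ is the downward triangle with apex $t$ whose bottom side lies on the horizontal line $\ell_c$ through $c$. Its interior is contained in the open strip strictly between $\ell_c$ and the horizontal line through $t$, intersected with $\mathrm{int}(C_t^0)$.

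\textbf{Monotonicity.} I would use the fact from \cite{DBLP:journals/comgeo/AkitayaBB22} that $\|\cdot t\|_{\hex}$ strictly decreases along $\pi(s,t)$. Inside $C_t^0$ the hexagonal distance to $t$ equals the vertical distance to $t$. Since $a$ precedes $c$, $a$ therefore lies strictly below $\ell_c$. Every vertex between $a$ and $b$ also stays outside the hexagon of radius $\|ct\|_{\hex}$ around $t$. Hence $\mathrm{int}(\nabla_t^c)$, which lies inside that hexagon, contains none of these vertices, nor $c$ itself.

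\textbf{Case 1: the subpath from $a$ to $c$ stays in $C_t^5\cup C_t^0\cup C_t^1$.} By Lemma~\ref{lem:adj}, consecutive vertices change cone by at most one. In this case I would argue directly with the cone containing $b$.
\begin{itemize}
\item If $b\in C_a^3$, then $\nabla_a^b$ is an upward triangle with apex $a$ and top side at the height of $b$. I would show that this top side is at or below $\ell_c$, or that the triangle leaves $C_t^0$ before reaching $\ell_c$. This uses that the subpath from $b$ to $c$ only moves closer to $t$ in hex distance, and that it can only re-enter $C_t^0$ from the adjacent cones $C_t^5,C_t^1$ through the rays $R_t^5,R_t^0$.
\item If $b\in C_a^2$ or $b\in C_a^4$, then $b$ lies in a side cone of $a$. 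Here $\nabla_a^b$ is bounded by lines of slope $\pm\sqrt3$ and horizontal lines through $a$, and I would show that it stays on one side of the boundary ray of $C_t^0$, away from $\mathrm{int}(\nabla_t^c)$.
\item The remaining cones of $a$, pointing away from $t$, cannot contain $b$, since that would increase the hex distance to $t$.
\end{itemize}

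\textbf{Case 2: the subpath from $a$ to $c$ winds around $t$.} Here the path leaves $C_t^0$ and returns to it only after visiting all six cones, and this is where the hypothesis that $ts$ is an edge enters. Since $s$ precedes $a$ and the hex distance decreases, the closed curve formed by the subpath from $a$ to $c$ plus the segment $ca$ encloses $t$. Such a curve must cross the segment $ts$, or pass through $\mathrm{int}(\nabla_t^s)$. Every vertex of the path after $s$ lies strictly closer to $t$ than $s$ in $\|\cdot\|_{\hex}$. By Lemma~\ref{lem:adj}, a crossing of $\nabla_t^s$ would force either a vertex in $\mathrm{int}(\nabla_t^s)$, which is impossible since $\nabla_t^s$ is empty, or a greedy edge whose empty triangle contains $s$ or $t$ in its interior. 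The latter contradicts the choice of the greedy neighbour.

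\textbf{Main obstacle.} I expect Case 2 to be the hardest part: turning ``the path spirals around $t$'' into a rigorous crossing of $\nabla_t^s$. I would handle it by tracking the sequence of cones $C_t^j$ visited and the first time the path crosses the ray from $t$ through $s$. I would then compare the hex radius of the crossing point with $\|ts\|_{\hex}$ to show that the crossing point lies in $\nabla_t^s$, which gives the contradiction.
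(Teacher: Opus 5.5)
Your case split is not exhaustive, and the missing case is where the content of the lemma lies. Besides ``stays in $C_t^5\cup C_t^0\cup C_t^1$'' and ``winds around $t$'', the path could a priori leave the three lower cones and return without winding. For example, $b\in C_t^1$, the path climbs into $C_t^2$, comes back down into $C_t^1$ below the height of $b$, and then crosses $R_t^0$ into $C_t^0$. In that situation the $y$-coordinate is not monotone, because greedy edges leaving $C_t^2,C_t^3,C_t^4$ point downward. Nothing comes near $\nabla_t^s$ either, so neither of your arguments applies.

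Here is what closes it, for $ab$ an edge, so that $b\in C_a^3$ and $\text{int}(\nabla_a^b)\cap P=\emptyset$.
\begin{itemize}
\item \emph{Winding is excluded by a simple argument.} Every vertex after $s$ has hex distance to $t$ less than $\|st\|_{\hex}$, so none lies in the cone $C_t^k$ containing $s$ (it would lie in $\text{int}(\nabla_t^s)$). By Lemma~\ref{lem:adj} the path can therefore never pass from $C_t^{k-1}$ to $C_t^{k+1}$. No crossing-point analysis is needed.
\item \emph{Reduction to the first return.} Let $c'$ be the first vertex of $C_t^0$ after $b$. Later vertices of $C_t^0$ are higher, by hex monotonicity, so it suffices to show that $c'$ lies above $b$.
\item \emph{Case $b\in C_t^1$.} The predecessor $w$ of $c'$ must lie in $C_t^1$, since returning through $C_t^5$ would visit every cone after $s$. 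Put $t$ at the origin. From $b\in C_a^3\cap C_t^1$ we get $\|bt\|_{\hex}=x_b-y_b/\sqrt{3}\le x_a-y_a/\sqrt{3}$. Hence $x_w-y_w/\sqrt{3}<x_a-y_a/\sqrt{3}$, which says $w$ is strictly left of $R_a^2$.
\item \emph{Locating $w$.} The fact $w\in C_t^1$, together with $a\in C_t^0$, puts $w$ strictly right of $R_a^3$. The inequality $\|wt\|_{\hex}<\|at\|_{\hex}$ puts $w$ above $a$. So $w\in\text{int}(\nabla_a^b)$, a contradiction, unless $y_w>y_b$. In that case $y_{c'}>y_w>y_b$.
\item \emph{Remaining cases.} The case $b\in C_t^5$ is symmetric, and $b\in C_t^0$ is immediate.
\end{itemize}

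Your bullets for $b\in C_a^2\cup C_a^4$ cannot be completed, because the statement is false when $b$ is not adjacent to $a$. Take $t=(0,0)$, $s=(-20,0.5)$, $v_1=(-12,-10.5)$, $a=(-3,-10)$, $v_2=(-4.4,-7.5)$, $b=(0,-7.4)$, $c=(0,-7.2)$; general position holds. One checks the following:
\begin{itemize}
\item $\pi(s,t)$ is $s,v_1,a,v_2,b,c,t$;
\item $ts$ is an edge, since $s$ is the only vertex above $t$;
\item $a,c\in C_t^0$ and $b\in C_a^2$;
\item the point $(-0.75,-7.1)$ lies in $\text{int}(\nabla_a^b)\cap\text{int}(\nabla_t^c)$.
\end{itemize}
This path even stays in $C_t^5\cup C_t^0\cup C_t^1$. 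So the lemma must be read with $ab$ an edge of $\pi(s,t)$, which is the only way it is used, in the proof of Lemma~\ref{lem:greedyPotential}. Then $b\in C_a^3$ automatically.

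Within your Case 1, hex monotonicity alone cannot compare the heights of $b\in C_t^1$ and $c\in C_t^0$. The right tool is that greedy edges leaving $C_t^5\cup C_t^0\cup C_t^1$ point upward, so $y$ increases. Also, your alternative ``the triangle leaves $C_t^0$ before reaching $\ell_c$'' never occurs. An upward triangle with apex in $\text{int}(C_t^0)$ meets $\text{int}(\nabla_t^c)$ as soon as it rises above $\ell_c$.
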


\begin{lem}[Lemma 4 in \cite{DBLP:journals/comgeo/AkitayaBB22}]
    Let $s,t\in P$ and assume $ts$ is an edge of $\vec{\Theta}_6(P)$. Then for $i\in\mathbb{Z}_6$, we have
    $$\sum\limits_{\substack{ab\in \pi(s,t)\\ a\in C_t^i}} \|ab\|\leq \|\nabla_t^s\|$$
\end{lem}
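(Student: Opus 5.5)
The idea is to turn each edge $ab$ with $a\in C_t^i$ into an interval on a fixed ``level'' axis in the direction of the bisector $B_t^i$. I would show that these intervals are pairwise disjoint, that each has length $\|ab\|_{\hex}$, and that all of them lie in $[0,\|\nabla_t^s\|]$. Summing then gives $\sum\|ab\|_{\hex}\le\|\nabla_t^s\|$, and the Euclidean version follows from $\|ab\|_2\le\|ab\|_{\hex}$.

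\emph{Step 1 (level function).} Define $f_i(p)$ as the signed distance from $t$ of the line through $p$ perpendicular to $B_t^i$, scaled so that $f_i(p)=\|tp\|_{\hex}$ for $p\in C_t^i$. Then $\nabla_t^{c}=\{p\in C_t^i: f_i(p)\le f_i(c)\}$ for $c\in C_t^i$.

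Let $ab$ be an edge of $\pi(s,t)$ with $a\in C_t^i$. Because the path is greedy, $b$ lies in the cone of $a$ containing $t$, which is $C_a^{i+3}$. Inside $C_a^{i+3}$ the hex distance from $a$ is measured along the direction $-B_t^i$. Hence $f_i(a)-f_i(b)=\|ab\|_{\hex}$, whether or not $b$ stays in $C_t^i$. Call $I_{ab}:=[f_i(b),f_i(a)]$; it has length $\|ab\|_{\hex}$.

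\emph{Step 2 (range).} I would use the fact from \cite{DBLP:journals/comgeo/AkitayaBB22} that $\|ut\|_{\hex}$ strictly decreases along greedy edges. This gives $f_i(a)=\|at\|_{\hex}\le\|st\|_{\hex}=\|\nabla_t^s\|$ for every vertex $a$ of $\pi(s,t)$ in $C_t^i$. I also need $f_i(b)\ge 0$. Since $b\in\nabla_a^b\subseteq C_a^{i+3}$ and $\|ab\|_{\hex}\le\|at\|_{\hex}$, the point $b$ cannot pass below level $0$. If this is delicate, I would instead use that only the portion of $I_{ab}$ that is relevant in Step 3 lies above $0$, and bound the rest separately.

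\emph{Step 3 (disjointness, the main step).} Let $ab$ and $a'b'$ be edges in this order on $\pi(s,t)$, with $a,a'\in C_t^i$. Applying Lemma~\ref{lem:3from1} to $a,b,c:=a'$ gives $\text{int}(\nabla_a^b)\cap\text{int}(\nabla_t^{a'})=\emptyset$.

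I then need to convert this into $f_i(a')\le f_i(b)$, which makes $I_{a'b'}$ lie below $I_{ab}$. The key geometric fact is that $C_a^{i+3}\cap C_t^i$ is a region with apex $a$, heading toward $t$, whose slice at every level in $(0,f_i(a))$ has nonempty interior. Consequently $\nabla_a^b\cap C_t^i$ has interior points at every level in $(f_i(b),f_i(a))$. If $f_i(a')>f_i(b)$, pick a level strictly between $\max(f_i(b),\,f_i(a)-\varepsilon)$ and $\min(f_i(a'),f_i(a))$. This yields an interior point common to $\nabla_a^b$ and $\nabla_t^{a'}$, a contradiction.

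Verifying this slice claim carefully, including the degenerate case $b=a'$ (which Lemma~\ref{lem:3from1} allows), is where I expect the real work. Once it holds, the intervals $I_{ab}$ for successive edges with tail in $C_t^i$ are ordered and interior-disjoint inside $[0,\|\nabla_t^s\|]$. Summing their lengths proves the lemma.
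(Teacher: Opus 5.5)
Your proposal is correct and is essentially the projection argument the paper gives for its modified version, Lemma~\ref{lem:greedyPotential}: your level $f_i$ is the position of the horizontal projection onto a boundary ray of $C_t^i$, and disjointness of the projected intervals comes from Lemma~\ref{lem:3from1}, as in your Step 3. The one blemish is the $f_i(a)-\varepsilon$ in Step 3, which for small $\varepsilon$ can make the chosen interval empty. Drop it and pick any level in $(f_i(b),f_i(a'))$. This works because $f_i(a')<f_i(a)$, and by general position $a\in\text{int}(C_t^i)$, so $C_a^{i+3}\cap C_t^i$ has open slices at every level in $(0,f_i(a))$.
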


Instead of working with an empty triangle that forces the greedy path not to spiral as in \cite{DBLP:journals/comgeo/AkitayaBB22}, we will ignore any spiraling by only using the greedy path up to and including the first edge that crosses the boundary $R_t^0$. Despite this change, the arguments in \cite{DBLP:journals/comgeo/AkitayaBB22} can be applied in our setting to prove Lemma~\ref{lem:greedyPotential} which is a slightly modified version of Lemma 4 from \cite{DBLP:journals/comgeo/AkitayaBB22}. 

\begin{lem}\label{lem:greedyPotential}
    Fix $j\in\mathbb{Z}_6$ and let $s,t,u,v\in P$ such that $uv$ is an edge of $\pi(s,t)$ and no edge of the truncated greedy path $\pi(s,t)_u$ crosses $R_t^j$. For $i\in\mathbb{Z}_6$, let $f_i$ denote the first vertex of $\pi(s,t)_{u}$ in $C_t^i$. If no such vertex $f_i$ exists, let $f_i:=t$. Then $\|\pi(s,t)_{v}\|_{\hex}\leq\sum_{i=0}^5 \|f_it\|_{\hex}$.
\end{lem}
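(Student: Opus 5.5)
We follow the approach of Lemma 4 from \cite{DBLP:journals/comgeo/AkitayaBB22}: split the path by the cone of $t$ containing each edge's tail, and charge each group to one term $\|f_it\|_{\hex}$. Write $\|\pi(s,t)_v\|_{\hex}=\sum_{i=0}^5 S_i$, where $S_i:=\sum \|ab\|_{\hex}$ over the edges $ab$ of $\pi(s,t)_v$ with $a\in C_t^i$. The tail of every such edge is a vertex of $\pi(s,t)_u$, so if no vertex of $\pi(s,t)_u$ lies in $C_t^i$ then $S_i=0=\|f_it\|_{\hex}$. It therefore suffices to show $S_i\leq\|f_it\|_{\hex}$ for each $i$ with $f_i\neq t$.

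Fix such an $i$. The key geometric fact is the greedy step inside a cone. If $a\in C_t^i$ and $ab$ is greedy toward $t$, then $b\in C_a^{i+3}$, since $t\in C_a^{i+3}$. Moreover $\nabla_a^b\subseteq\nabla_a^t$, and $\|ab\|_{\hex}=\|\nabla_a^b\|$ is exactly the decrease of the ``height'' of $a$ toward $t$ along the bisector direction. Concretely, let $h_i(x)$ be the $\|\cdot\|_{\hex}$-distance from $t$ to the line through $x$ parallel to the side of $\nabla_t^x$ opposite $t$. For $x\in C_t^i$ we have $h_i(x)=\|tx\|_{\hex}$. For an edge $ab$ with $a\in C_t^i$ we get $h_i(b)=h_i(a)-\|ab\|_{\hex}$, measured along the same direction.

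By Lemma~\ref{lem:adj}, $b\in C_t^{i-1}\cup C_t^i\cup C_t^{i+1}$. Consequently, within a single maximal run of consecutive vertices in $C_t^i$, the sum of the edge lengths telescopes to at most $h_i(\text{first vertex of the run})$, and this is at most $\|f_it\|_{\hex}$.

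The main obstacle is the path leaving $C_t^i$ and re-entering it later. Summing the telescoping bound over separate visits would overcount unless each later visit starts strictly lower than where the previous visit ended. This is where the hypothesis that no edge of $\pi(s,t)_u$ crosses $R_t^j$ replaces the emptiness of $\nabla_t^s$ used in \cite{DBLP:journals/comgeo/AkitayaBB22}. Because the truncated path never crosses $R_t^j$ and moves only between adjacent cones (Lemma~\ref{lem:adj}), it cannot complete a full turn around $t$. To return to $C_t^i$ after leaving it, the path must leave through one boundary ray $R_t^{i-1}$ or $R_t^i$ and come back through the same ray, i.e.\ it visits a neighbouring cone and turns back.

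We then need the analogue of Lemma~\ref{lem:3from1}: the greedy steps taken in the neighbouring cone lie inside the previous empty triangles, and $\|\cdot\|_{\hex}$-distance to $t$ strictly decreases along greedy edges. From these we argue that the re-entry vertex $c$ has $\nabla_t^c$ disjoint from the interiors of the triangles $\nabla_a^b$ already used in $C_t^i$. Equivalently, $h_i(c)\leq h_i$ of the last vertex of the earlier visit in $C_t^i$. This makes the visits to $C_t^i$ nest, so the telescoping bounds chain together and give $S_i\leq h_i(f_i)=\|f_it\|_{\hex}$.

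I expect the careful verification of this re-entry monotonicity to be the hardest step. I would try first to reproduce the proof of Lemma~\ref{lem:3from1}, replacing its use of the edge $ts$ by the non-crossing of $R_t^j$.

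Finally, the edge $uv$ itself is included: its tail $u$ lies in some $C_t^i$ and it is a greedy step, so it telescopes within its cone exactly like the others. Its head $v$ never contributes as a tail, which is why $f_i$ only needs to range over vertices of $\pi(s,t)_u$. Summing over $i$ gives the claim.
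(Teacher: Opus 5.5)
Your proposal follows essentially the paper's proof: your height $h_i$ is the paper's projection $p\mapsto p'$ onto $R_t^0$ along horizontal lines (for $i=0$), and both arguments rest on Lemma~\ref{lem:3from1}, with the non-crossing of $R_t^j$ in place of the edge $ts$, to show that the projections of edges with tail in $C_t^i$ are disjoint subsegments of $f_i't$, and then sum over the six cones. One small correction: disjointness of $\text{int}(\nabla_t^c)$ and $\text{int}(\nabla_a^b)$ is equivalent to $h_i(c)\le h_i(b)$, where $b$ is the head of the edge leaving $C_t^i$, not to $h_i(c)$ being at most $h_i$ of the last vertex inside $C_t^i$, and only the former lets the telescoped runs chain without double-counting that exit edge.
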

\begin{proof}
    Suppose without loss of generality that $i=0$. For any point $p$ below $t$, let $p'$ denote the intersection of the horizontal line through $p$ with the ray $R_t^0$. We claim that if $ab$ and $cd$ are edges of $\pi(s,t)_{v}$ such that $a,c\in C_t^0$, then the segments $a'b'$ and $c'd'$ are disjoint. Furthermore, both $a'b'$ and $c'd'$ are subsegments of $f_0't$. This claim follows from Lemma~\ref{lem:3from1} of \cite{DBLP:journals/comgeo/AkitayaBB22} since the path $\pi(s,t)_u$ is assumed to not cross $R_t^j$. We sum all edges with source in $C_t^0$ to obtain
    \begin{align*}
        \sum_{\substack{ab\text{ on }\pi(s,t)_{v}\\ a\in C_t^0}} \|ab\|_{\hex}= \sum_{\substack{ab\text{ on }\pi(s,t)_{v}\\ a\in C_t^0}} \|a'b'\|_{\hex}\leq \|f_0't\|_{\hex}=\|f_0t\|_{\hex}.
    \end{align*}
    The lemma follows by summing over the edges in all cones $C_t^i$ for $i\in\mathbb{Z}_6$.
\end{proof} 
For the rest of the section, we make Assumptions~\ref{asm:q} and~\ref{asm:x} to help us prove Theorem~\ref{thm:sr}. We then consider the case where Assumption~\ref{asm:q} is not satisfied in Section~\ref{sec:notq}. On the other hand, Section~\ref{sec:onlyq} treats the case when Assumption~\ref{asm:x} does not hold. The first assumption roughly states that the greedy path from $y$ spirals counterclockwise around $t$ without making significant progress to $t$. We define $u^y v^y$ to be the first edge of $\pi(y,t)$ which crosses $R_t^0$. If such an edge does not exist, let $u^y:=t$ and $v^y:=t$. Then the first assumption can be stated as follows:
\begin{assumption}\label{asm:q}
    There exists a vertex $q\in C_t^5$ on $\pi(y,t)_{v^y}$ with $\|qt\|_{\hex}>y_0$. 
\end{assumption}
Our second assumption is that there is no edge $uv$ crossing $R_t^5$ such that $u\in T$ and $v\in C_u^4$.
\begin{assumption}\label{asm:x}
    For every vertex $u\in T$, we have  $P\cap\text{int}( C_t^0\cap C_u^4)\neq\emptyset$. 
\end{assumption}
Let $x$ be the vertex such that $sx$ is the edge in $C_s^4$. By Assumption~\ref{asm:x}, $x$ exists and $x\in C_t^0$. Combining our two assumptions allows us to naturally consider two candidate paths, loosely referred to as the {\it $x$-path} and the {\it $y$-path}, and show that one of them is sufficiently short to force a contradiction with $d_G(s,t)>5\|st\|_{\thex}$. Informally, the $x$-path travels clockwise around $t$, whereas the $y$-path travels in the opposite direction. Where these two paths meet, they must compete for space. The lengths of the paths are primarily dictated by their {\it long} edges which cross cone boundaries around $t$. These {\it long} edges come with large empty regions which squeeze the other path closer to $t$. We formalize this intuition below.
    
Recall that $u^y v^y$ is the first edge of $\pi(y,t)$ that crosses the boundary $R_t^0$. Consider the candidate path $sy$ concatenated with $\pi(y,t)_{v^y}$, which we loosely refer to as the $y$-path, ending at $v^y$. Overall, the $y$-path progresses counterclockwise about $t$ until $C_t^5$, however it is still possible that the $y$-path reverses direction and ends up crossing $R_t^0$ from $C_t^1$ to $C_t^{0}$. Maintaining full generality, we may only assume that for some $i\in\{0,1\}$, we have $u^y\in C_t^i$ and $v^y\in C_t^{1-i}$, which follows from Lemma~\ref{lem:adj}. In light of Lemma~\ref{lem:greedyPotential}, we need not concern ourselves with every edge of this path. An upper bound can be established from only the first vertex of each cone. In fact, we only need to upper bound the value $\|f_i t\|_{\hex}$ where $f_i$ is the first vertex of $C_t^i$, hence we look to the previous cone for structure. For $i\in\{1,2,3,4,5\}$, let $y^i\in C_t^i$ be the vertex of $\pi(y,t)_{u^y}$ which maximizes $\| C_{t}^i\cap C_{y^i}^{i-2}\|$. See Figure~\ref{fig:Backtracking} for an example.
\begin{figure}[ht!]
        \centering
        \includegraphics[page=90, scale=1]{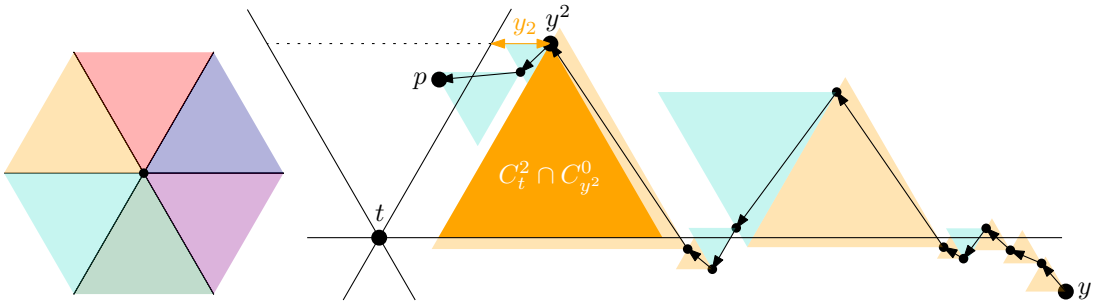}
        \caption{In this example, $p$ is the first vertex of $\pi(y,t)_{v^y}$ in cone $C_t^3$. The highest vertex of $\pi(y,t)_{v^y}$ in $C_t^2$ is $y^2$ by definition. The maximal triangle $C_{t}^2\cap C_{y^2}^{0}$ is coloured orange, and $y_2=\|y^2 t\|_{\hex}-\|C_t^2\cap C_{y^2}^{0}\|$. Observation~\ref{obs:orderedyi} implies that $y^2$ has a greater $y$-coordinate than $p$, thus $\|C_{t}^2\cap C_{y^2}^{0}\|$ provides an upper bound on $\|pt\|_{\hex}$.\label{fig:Backtracking}}
    \end{figure}
\begin{obs}\label{obs:orderedyi}
    For $i\in\{1,2,3,4,5\}$, no vertex of $\pi(y,t)_{y^i}$ is in $\text{int}( C_t^{i+1})$. 
\end{obs}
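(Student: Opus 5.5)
The plan is to reduce the observation to a monotonicity statement about the quantity that $y^i$ maximizes. Fix $i\in\{1,\dots,5\}$ and suppose, towards a contradiction, that some vertex of $\pi(y,t)_{y^i}$ lies in $\text{int}(C_t^{i+1})$. Let $w$ be the first such vertex on $\pi(y,t)_{u^y}$, and let $u$ be its predecessor on the path. Since $y\in C_t^1$, no edge of $\pi(y,t)_{u^y}$ crosses $R_t^0$, and consecutive vertices lie in adjacent cones by Lemma~\ref{lem:adj}, the sequence of cone indices visited by the path is a walk on $\{0,1,\dots,5\}$ that never wraps through $R_t^0$. Hence the path reaches $\text{int}(C_t^{i+1})$ only by crossing $R_t^i$, so $u\in C_t^i$. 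Writing $h(z):=\|C_t^i\cap C_z^{i-2}\|$ for $z\in C_t^i$ (for $i=2$ this is the ``height'' above the horizontal ray $R_t^1$), it suffices to show:
\[
h(z)<h(u)\quad\text{for every vertex } z \text{ of } \pi(y,t)_{u^y} \text{ in } C_t^i \text{ that appears after } w.
\]
Indeed, $y^i$ maximizes $h$, so it then occurs at or before $u$. Therefore $\pi(y,t)_{y^i}$ contains no vertex after $u$, and in particular does not contain $w$, which is the desired contradiction.

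To prove this claim I would use the empty triangle of the crossing edge. The edge $uw$ is greedy, so $w\in C_u^{i+3}$ (the cone of $u$ containing $t$), and $\text{int}(\nabla_u^w)$ is empty. Its side opposite $u$ is parallel to $R_t^{i-1}$ and passes through $w$, and it cuts across $R_t^i$. Next I would invoke the fact from \cite{DBLP:journals/comgeo/AkitayaBB22} that $\|\cdot t\|_{\hex}$ strictly decreases along a greedy path. Every later vertex therefore lies strictly inside the hexagon $\{p:\|pt\|_{\hex}<\|wt\|_{\hex}\}$. To return to $C_t^i$, the path must cross $R_t^i$ again, and it cannot do so inside $\text{int}(\nabla_u^w)$. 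This is the same disjointness mechanism as in Lemma~\ref{lem:3from1}, which in the proof of Lemma~\ref{lem:greedyPotential} was justified only by the absence of crossings of a fixed ray, here $R_t^0$, rather than by an edge $ts$. It forces every subsequent vertex $z\in C_t^i$ to lie on the far side of the line through $w$ parallel to $R_t^{i-1}$, which gives $h(z)\le h(w')<h(u)$, where $w'$ is the point of that line on $R_t^i$.

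I expect this last step to be the main obstacle: ruling out that the path, after entering $C_t^{i+1}$, swings back into $C_t^i$ at a height comparable to $u$, either by going around the empty triangle or via a later re-crossing. My first attempt would be a projection argument in the spirit of the proof of Lemma~\ref{lem:greedyPotential}. Project the sources of edges in $C_t^i$ horizontally, relative to cone $i$, onto $R_t^{i-1}$. Then show that the projected segments are disjoint and nested towards $t$ in order of appearance on the path, using emptiness of each greedy triangle together with the decreasing hex distance. The case split is on whether $w\in C_u^{i+3}$ lies to the left or right of the bisector $B_u^{i+3}$. In the second case I would additionally use Lemma~\ref{lem:adj} to restrict where the next edge from $w$ can land.
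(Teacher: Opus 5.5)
Your setup is right and matches the paper's. You take the first vertex $w$ of the path in $\text{int}(C_t^{i+1})$, note that its predecessor $u$ lies in $C_t^i$ because the path never crosses $R_t^0$, and reduce to showing that every later vertex $z\in C_t^i$ has $h(z)<h(u)$. However, you leave the step you call the main obstacle unproved. The tools you propose for it are only sketched, never carried out: the empty triangle $\nabla_u^w$, the disjointness mechanism of Lemma~\ref{lem:3from1}, and a projection argument with a case split on $B_u^{i+3}$. As written, the proof is therefore incomplete. Also, you say that later vertices lie on the ``far side'' of the line through $w$; you need them on $t$'s side.

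The missing step needs nothing beyond the fact you already cite, that $\|\cdot\,t\|_{\hex}$ strictly decreases along the greedy path.
\begin{itemize}
\item Let $L=R_w^{i-1}\cup R_w^{i+2}$ be the line through $w$ parallel to $R_t^{i-1}$. Since $w\in C_t^{i+1}$, $L$ contains the edge in $C_t^{i+1}$ of the hexagon $\{p:\|pt\|_{\hex}\le\|wt\|_{\hex}\}$. By convexity, every point with $\|pt\|_{\hex}<\|wt\|_{\hex}$ lies strictly on $t$'s side of $L$.
\item For $z\in C_t^i$, $h(z)$ equals $\frac{2}{\sqrt3}$ times the distance from $z$ to the line through $t$ parallel to $R_t^{i-1}$. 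Hence $L\cap C_t^i$ is exactly the level set $h=\|wt\|_{\hex}$.
\item Since $w\in C_u^{i+3}$, the vertex $u$ lies on the opposite side of $L$ from $t$, strictly by general position. So $h(u)>\|wt\|_{\hex}$.
\item Every later vertex $z$ has $\|zt\|_{\hex}<\|wt\|_{\hex}$, hence $h(z)<\|wt\|_{\hex}<h(u)$.
\end{itemize}
The path cannot ``swing back at comparable height'' simply because its hex distance to $t$ keeps decreasing; no emptiness or projection argument is needed. This is precisely the paper's proof read in the other direction. The paper notes that $y^i$, having $h(y^i)\ge h(u)$, lies on the opposite side of $L$ from $t$. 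Hence $\|y^it\|_{\hex}>\|wt\|_{\hex}$, so $y^i$ precedes $w$ on the path.
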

\begin{subproof}
    The observation holds for $i=1$ since $y^1=y$. Then for $i\in\{2,3,4,5\}$, consider the first vertex $v$ of $ C_t^{i+1}$ on $\pi(y,t)_{u^y}$. Then $y^i$ and $t$ must be on opposite sides of the line $R_v^{i-1}\cup R_v^{i+2}$, which implies $\|y^it\|_{\hex}>\|vt\|_{\hex}$, meaning that $y^i$ appears before $v$ on $\pi(y,t)_{u^y}$.
\end{subproof}
Observation~\ref{obs:orderedyi} implies $y^1,y^2,y^3,y^4,y^5$ appear in order on $\pi(y,t)_{u^y}$. Intuitively, $y^i$ is the head of a {\it long} edge whose tail is in the previous cone $C_t^{i-1}$. Indeed, if the tail was in the same cone as $y^i$, then $\| C_{t}^i\cap C_{y^i}^{i-2}\|$ would not be maximal. We define the distance $y_i$ from $y^i$ to $ C_t^{i+1}$ as $y_i:=\| C_{t}^i\cap C_{y^i}^{i+2}\|$. Equivalently, $y_i=\|y^i t\|_{\hex}-\|C_t^i\cap C_{y^i}^{i-2}\|$. By Assumption~\ref{asm:q}, the $y$-path must pass through cones $C_t^1,...,C_t^5$, hence $y_i$ is well-defined for $i\in\{1,2,3,4,5\}$. Also notice that the region $\text{int}( C_{t}^i\cap C_{y^i}^{i-2})$ is empty since $y^i$ is necessarily the head of an edge from $ C_t^{i-1}$. Next, for each $i\in\mathbb{Z}_6$, we let $Y_i$ be the value of $\|ut\|_{\hex}$ where $u$ is the first vertex of $\pi(y,t)_{u^y}$ with $u\in C_t^i$. If no such vertex $u\in C_t^i$ exists, then $Y_i:=0$. First, we have $Y_1=\|yt\|_{\hex}\leq 1-y_0$. Next, for $i\in\{1,2,3,4,5\}$, cone $ C_t^{i+1}$ can only be visited from cone $ C_t^{i}$ on $\pi(y,t)_{u^y}$, therefore we see that for $i\in \{1,2,3,4,5\}$,
\begin{align}
   y_0+Y_1\leq 1 \qquad\text{and}\qquad y_{i}+Y_{i+1}\leq y_i+ \| C_{t}^{i}\cap C_{y^i}^{i-2}\|=\|y^it\|_{\hex}\leq Y_i\label{eq:Ys}
\end{align}
We include these inequalities in our system which will eventually be proven infeasible. Next we will discuss the second candidate path: the $x$-path. Recall $sx$ is the edge in $C_s^4\cap C_t^0$. We will define the analogous variables for $x$ as we did for $y$. Let $u^x v^x$ be the first edge of $\pi(x,t)$ that crosses $R_t^0$. As before, Lemma~\ref{lem:adj} implies that for some $i\in\{0,1\}$, we have $u^x\in C_t^i$ and $v^x\in C_t^{1-i}$. If no such crossing exists, then set $u^x:=t$ and $v^x:=t$. For $i\in\{0,5,4,3,2\}$, let $x^i\in C_t^i$ be the vertex of $\pi(x,t)_{u^x}$ which maximizes $\| C_{t}^i\cap C_{x^i}^{i+2}\|$. We define the distance $x_i$ from $x^i$ to $ C_t^{i-1}$ as $x_i:=\| C_{t}^i\cap C_{x^i}^{i-2}\|$. If no such $x^i$ exists, then define $x_i:=0$.  However, we will later argue in Claim~\ref{obs:x3} that the vertices $x^0,x^5,x^4,x^3$ must exist, otherwise the $x$-path is sufficiently short to force a contradiction. Observe that for $i\in\{5,4,3,2\}$ the region $ C_{t}^i\cap C_{x^i}^{i+2}$ is empty since $x^i$ is the head of an edge from $ C_t^{i+1}$. Next, for each $i\in\mathbb{Z}_6$, we let $X_i$ be the value of $\|ut\|_{\hex}$ where $u$ is the first vertex of $\pi(x,t)_{u^x}$ with $u\in C_t^i$. If no such vertex $u$ exists, then $X_i:=0$. Since greedy paths decrease in $\|\cdot\|_{\hex}$-norm, then we have $X_0=\|xt\|_{\hex}\leq 1$. Next, for $i\in\{5,4,3,2,1\}$, if $v$ is the first vertex of $\pi(x,t)_{u^x}$ in $C_t^i$, then its predecessor $u$ must be in cone $C_t^{i+1}$, therefore we see that for $i\in \{5,4,3,2,1\}$,
\begin{align}
   X_0\leq 1 \quad\text{and}\quad x_{i+1}+X_{i}\leq x_{i+1} +\| C_{t}^{i+1}\cap C_{x^{i+1}}^{i+3}\|=\|x^{i+1}t\|_{\hex}\leq X_{i+1}.\label{eq:Xs}
\end{align}
Intuitively, the only vertices in cone $C_t^i$ that are important to us are $x^i$ and $y^i$. Now we consider possible paths from $s$ to $t$. We could take the $y$-path from $s$ to $v^y$, then apply induction at $v^y$. By Lemma~\ref{lem:greedyPotential}, this yields
\begin{align}
    5-2.5y_0 &\leq d_G(s,t) \nonumber\\
    &\leq \|sy\|_{\hex}+\|\pi(y,t)_{v^y}\|_{\hex} +d_G(v^y,t) \nonumber\\
    &\leq (1-y_1)+(Y_1+Y_2+Y_3+Y_4+Y_5+Y_0) +5||v^yt||_{\thex}.\label{eq:YPath}
\end{align}
On the other hand, we could instead follow the $x$-path from $s$ to $v^x$, then apply induction at $v^x$. Similarly, we obtain
\begin{align}
    5-2.5y_0 &\leq d_G(s,t) \nonumber\\
    &\leq \|sx\|_{\hex}+\|\pi(x,t)_{v^x}\|_{\hex} +d_G(v^x,t) \nonumber\\
    &\leq (1-y_0-x_0)+(X_0+X_5+X_4+X_3+X_2+X_1) +5||v^xt||_{\thex}.\label{eq:XPath}
\end{align}
While we have proven several inequalities (\ref{eq:firstAssumption}),(\ref{eq:Ys}),(\ref{eq:Xs}),(\ref{eq:YPath}),(\ref{eq:XPath}), they are not yet strong enough to form an infeasible system. First, notice that since $R$ may not contain vertices, then $u^y v^y$ must cross both $B_t^0$ and $B_t^1$. By the following lemma, this implies that the edge $ u^y v^y$ makes significant progress towards $t$. 
\begin{lem}\label{lem:RCrossing}
    Let $u,v,t\in P$ with edge $uv$ contained in $\nabla_u^t$. Assume $uv$ crosses both $B_t^0$ and $B_t^1$, then $2\|vt\|_{\thex}\leq \|u v\|_{\hex}$.
\end{lem}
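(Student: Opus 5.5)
The plan is to reduce the statement to one normalized configuration and then verify a linear inequality over a small polygon of admissible positions of $v$.

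\textbf{Normalization.} First I would pin down the configuration. The region between $B_t^0$ and $B_t^1$ is the wedge around $R_t^0$. Because $uv$ crosses both bisectors, $u$ and $v$ lie on opposite sides of this wedge. Suppose $u$ lies on the $B_t^1$ side, i.e.\ in $C_t^1$ above $B_t^1$ or further counterclockwise. Since $uv\subseteq\nabla_u^t$, the vertex $v$ lies in the cone of $u$ that contains $t$. The mirror case ($u$ in $C_t^0$ left of $B_t^0$) follows by reflecting through the line containing $R_t^0$. This reflection swaps $B_t^0$ with $B_t^1$, and it preserves $\nabla$, $\|\cdot\|_{\hex}$ and $\|\cdot\|_{\thex}$.

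In the main case, $t$ lies in $C_u^4$ (or in $C_u^3$ if $u$ is above $R_t^1$). Requiring $uv$ to reach $B_t^0$ while staying inside $\nabla_u^t$ forces the $C_u^4$ case, so I assume $u\in C_t^1$ above $B_t^1$. Put $t$ at the origin and let $H:=\|ut\|_{\hex}$ and $h:=\|uv\|_{\hex}$. Since $\nabla_u^v\subseteq\nabla_u^t$, we have $h\le H$.

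\textbf{The constraints on $v$.} The far side of $\nabla_u^t$ is the segment of slope $\sqrt3$ through $t$. The far side of $\nabla_u^v$ is the parallel segment through $v$, at hex-offset $H-h$ from $t$'s side. So $v$ lies on a segment of slope $\sqrt3$ that cuts $\nabla_u^t$. The conditions ``$u$ above $B_t^1$'' and ``$v$ left of $B_t^0$'' become linear constraints, giving $v\in\nabla_u^t\cap C_t^0\cup C_t^5$ on the left of $B_t^0$.

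That region is a small polygon near $t$. It is bounded by the vertical line through $t$, the horizontal side of $\nabla_u^t$ through $u$, and the far side of $\nabla_u^v$. On it, $\|vt\|_{\thex}$ is a convex piecewise-linear function. Its pieces are governed by which of $C_t^0$ and $C_t^5$ contains $v$, and are computed with Lemma~\ref{lem:hexthex}. It therefore suffices to check $2\|vt\|_{\thex}\le h$ at the vertices of this polygon.

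\textbf{Key geometric fact.} I will use that the horizontal side of $\nabla_u^v$ lies at the height of $u$. This height is at least $\tfrac{1}{\sqrt3}$ times the horizontal distance from $u$ to $t$, because $u$ is above $B_t^1$. Hence any point of $\nabla_u^v$ left of the vertical through $t$ has horizontal offset at most $h-H+(\text{something}\le H/2)$ from $t$. From this I expect $\|vt\|_{\thex}$ to be at most half the length of the portion of the far side of $\nabla_u^v$ beyond $B_t^0$, which is at most $h/2$.

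\textbf{Main obstacle.} The hard part is carrying out this vertex check cleanly. In particular I must handle the two possible cones of $v$ ($C_t^0$ versus $C_t^5$) and confirm the worst case. I expect the worst case to be $u$ exactly on $B_t^1$ with $v$ exactly on $B_t^0$, where equality $2\|vt\|_{\thex}=h$ should occur. I would first compute that extreme configuration explicitly. I would then argue monotonicity: moving $u$ counterclockwise away from $B_t^1$, or moving $v$ further past $B_t^0$, only increases $h$ relative to $\|vt\|_{\thex}$.
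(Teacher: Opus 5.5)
The normalization is sound: $u\in C_t^1$ above $B_t^1$ and $v\in\nabla_u^t$ left of $B_t^0$ is the mirror image of the paper's case. A vertex check is also legitimate in principle, since $\|uv\|_{\hex}$ and $\|vt\|_{\thex}$ are both linear on the admissible region. But the proposal never actually establishes the inequality, and the geometric claims you plan to use for it are wrong.

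Two small slips first. If $u\in C_t^2$ then $t\in C_u^5$, not $C_u^3$. Also, $\nabla_u^t$ lies on $u$'s side of the line of slope $\sqrt3$ through $t$, so $v\in C_t^0$ always and the $C_t^5$ case you list as an obstacle never arises.

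The substantive problem is that you bound the \emph{horizontal} offset of $v$ from $t$. In this configuration $\|vt\|_{\thex}$ is governed by the \emph{depth} of $v$ below $t$, not its horizontal offset. Your expected bound fails as follows. Take $v$ on $B_t^0$ at (or just above) the height of $u$. Then the part of the far side of $\nabla_u^v$ beyond $B_t^0$ has length (nearly) $0$, while $\|vt\|_{\thex}$ is about $\tfrac{\sqrt3}{2}$ times the depth of $u$ below $t$. Your predicted extremal case and monotonicity are also off. With $u$ on $B_t^1$ and $v$ on $B_t^0$, equality holds only if $uv$ is horizontal. Moreover, pushing $v$ further past $B_t^0$ along the far side of $\nabla_u^v$, or downward, strictly \emph{decreases} $h-2\|vt\|_{\thex}$. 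So the final monotonicity step you describe would fail.

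What actually decides the lemma is that $v\in C_u^4$ cannot lie below $u$, combined with $u$ lying above $B_t^1$. Put $t=(0,0)$, $u=(a,-b)$ and $v=(-c,-d)$, with $0\le\sqrt3 b\le a$ and $0\le\sqrt3 c\le d\le b$. Then $h=a+c+(b-d)/\sqrt3$ and $2\|vt\|_{\thex}=c+\sqrt3 d$, so
\[ h-2\|vt\|_{\thex}=(a-\sqrt3 b)+\tfrac{4}{\sqrt3}(b-d)\ge 0, \]
which is independent of $c$.

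The paper gets the same result synthetically, in its mirrored orientation. It sets $t'=B_t^0\cap R_v^5$. The segment $vt'$ is parallel to a side of $\nabla_u^v$, lies inside it because $u$ is left of $B_t^0$, and has length exactly $2\|vt\|_{\thex}$. In your orientation the analogous segment is the horizontal one from $v$ to $B_t^1$. It has length $c+\sqrt3 d=2\|vt\|_{\thex}$ and fits inside $\nabla_u^v$ precisely because $d\le b$ and $\sqrt3 b\le a$. That containment argument is the idea your proposal is missing.
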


\begin{proof}
    Suppose without loss of generality that $u\in  C_t^0$ with $u$ to the left of $B_t^0$, and $v\in  C_t^1$ with $v$ above $B_t^1$. See Figure~\ref{fig:RCrossing}. Consider the point $t':= B_t^0\cap R_v^5$. Since $u$ is to the left of $B_t^0$, then the segment $vt'$ is contained in $\nabla_u^v$, hence $\|vt'\|_{2}\leq \|\nabla_u^v\|=\|uv\|_{\hex}$. On the other hand, consider the point $v':=R_t^1\cap B_v^3$. Then considering that $vt'$ and $v't$ have respective slopes $\sqrt{3}$ and $0$, then we have $2\|vt\|_{\thex}=2\|v't\|_{2}=\|vt'\|_{2}\leq \|uv\|_{\hex}$.
    \begin{figure}[ht!]
        \centering
        \includegraphics[page=81, scale=0.9]{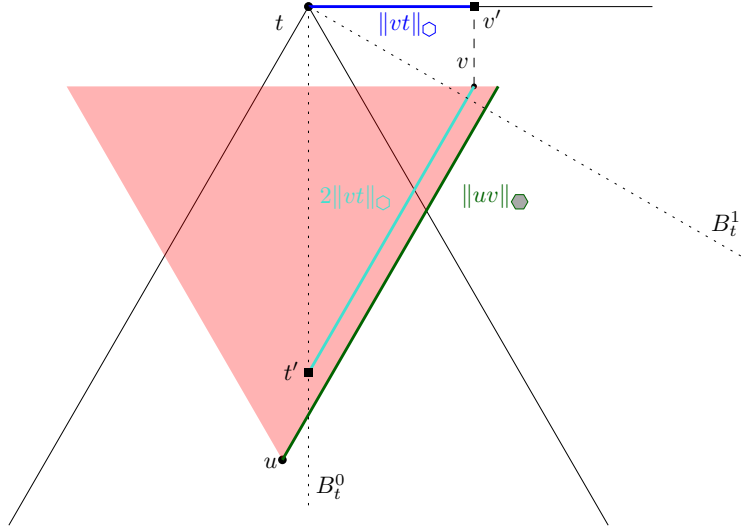}
        \caption{Lemma~\ref{lem:RCrossing} where $u\in  C_t^0$ with $u$ to the left of $B_t^0$, and $v\in  C_t^1$ with $v$ above $B_t^1$. \label{fig:RCrossing}}
    \end{figure}
\end{proof}

Lemma~\ref{lem:RCrossing} quantifies the toll required for an edge to cross the empty region. Intuitively, the distance to $t$ is halved each time the empty region $R$ is crossed. In order to also apply Lemma~\ref{lem:RCrossing} to $u^x v^x$, we require two observations, which follow from Assumptions~\ref{asm:q} and~\ref{asm:x}.
\begin{obs}\label{obs:x1y0}
If $x^1$ exists, then $\|x^1t\|_{\hex}\leq 1-y_0$. 
\end{obs}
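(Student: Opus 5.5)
The plan is to argue by contradiction and use the $y$-path, guaranteed by Assumption~\ref{asm:q}, as a barrier that the $x$-path cannot get around. Suppose $x^1$ exists with $\|x^1t\|_{\hex}>1-y_0$. We know $x\in C_t^0$ and $\|xt\|_{\hex}\leq 1$. By the result quoted from \cite{DBLP:journals/comgeo/AkitayaBB22}, hex distance to $t$ strictly decreases along $\pi(x,t)$, so $1-y_0<\|x^1t\|_{\hex}<1$.

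First I will locate $x^1$. The points of $C_t^1$ with hex distance to $t$ at most $1-y_0$ form the triangle $C_t^1\cap C_s^3$. So $x^1$ lies in $C_t^1$ strictly outside this triangle, on the far side of the line through $s$ bounding it, but still within hex distance $1$ of $t$. In particular $x^1$ is farther from $t$ than $y$, since $\|yt\|_{\hex}\leq 1-y_0$. By the definition of $u^x$, the truncated path $\pi(x,t)_{u^x}$ does not cross $R_t^0$. It therefore reaches $C_t^1$ only by travelling clockwise $C_t^0\to C_t^5\to\cdots\to C_t^2\to C_t^1$ (Lemma~\ref{lem:adj}). By Lemma~\ref{lem:empty}, it also avoids $\text{int}(R)$.

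Next I will set up the barrier. By Assumption~\ref{asm:q}, the $y$-path $s,y,\dots$ travels counterclockwise $C_t^1\to C_t^2\to\cdots\to C_t^5$ and reaches $q\in C_t^5$ with $\|qt\|_{\hex}>y_0$, without crossing $R_t^0$ before $v^y$. Together with the edge $sy$ and the empty region $R$, this polygonal curve separates $t$ from the points far from $t$ in each cone $C_t^i$, $i=1,\dots,5$. Here I use $\|qt\|_{\hex}>y_0=\|T\|$ to check that the curve passes outside $T$ in $C_t^5$, on the correct side of $x$'s route.

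The $x$-path starts at $x\in C_t^0$ and, in order to reach $x^1$, which lies beyond $y$ and beyond $sy$, must enter the region enclosed between this curve and the outer boundary. I will show it must cross some edge $cd$ of the $y$-path with an edge $ab$ of its own, both having sources in a common cone $C_t^i$. Both edges are greedy steps toward $t$, so $b,d\in C_a^{i+3}$-type directions are shared. A crossing then forces one of $b,d$ to lie in the interior of the other edge's empty triangle ($\nabla_a^b$ or $\nabla_c^d$), contradicting that both are $\Theta_6$ edges. The only alternative is that the nested triangles force the $x$-path to stay inside the $y$-curve, in which case $\|x^1t\|_{\hex}\leq\|y^1t\|_{\hex}=\|yt\|_{\hex}\leq 1-y_0$, again a contradiction.

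The main obstacle is the crossing step. It requires a careful case analysis of which cone the crossing occurs in, and of whether the crossing could be avoided by the $x$-path passing between $q$ and $T$ in $C_t^5$. I would handle that subcase using Assumption~\ref{asm:x}: the $x$-path's first move out of $T$ must go through the nonempty region $C_t^0\cap C_u^4$. I would first try the Lemma~\ref{lem:3from1}-style projection argument, which shows that greedy-path edges with sources in the same cone project to disjoint subsegments of the boundary ray. This forces an ordering of the two paths by distance to $t$ within each cone.
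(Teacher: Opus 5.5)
\textbf{There is a genuine gap, and it sits in the crossing step you flag as the main obstacle.} Your barrier does not separate $x$ from $x^1$: it is open in $C_t^0$, which is exactly where the $x$-path starts.

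Nothing forces a crossing. Under the hypothesis $\|x^1t\|_{\hex}>1-y_0$ you get $\|y^it\|_{\hex}\le\|yt\|_{\hex}\le 1-y_0<\|x^1t\|_{\hex}\le\|x^it\|_{\hex}$, because hex distance decreases along greedy paths and the two paths visit $C_t^1,\dots,C_t^5$ in opposite orders. So the picture the hypothesis describes is the following. The $y$-path leaves $s$ upward and winds counterclockwise close to $t$. The $x$-path leaves $s$ to the upper left and winds clockwise around the \emph{outside} of the $y$-path, farther from $t$ in every cone, reaching $x^1$ beyond $sy$ without ever meeting a $y$-path edge.

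Your dichotomy ``cross, or stay inside the $y$-curve'' omits this ``stay outside'' case. The fact that greedy edges with sources in the same cone cannot cross is true, but it cannot exclude a configuration with no crossing. The obstruction is metric, not topological, and Assumption~\ref{asm:x} plays no role.

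The missing idea is to charge the room the $y$-path needs to the $x$-path's hex-distance budget. Fix $i\in\{2,3,4\}$.
\begin{itemize}
\item The region $C_t^i\cap C_{x^i}^{i+2}$ has empty interior, since $x^i$ is the head of an edge from $C_t^{i+1}$.
\item Every $y$-path vertex in $C_t^i$ is strictly closer to $t$ than $x^i$. Together with the empty region, this forces every such vertex $v$ to satisfy $\|C_t^i\cap C_v^{i-2}\|\le x_i$.
\item In particular $\|C_t^i\cap C_{y^i}^{i-2}\|\le x_i$, so \eqref{eq:Ys} gives $Y_{i+1}\le x_i$.
\end{itemize}
Assumption~\ref{asm:q} together with \eqref{eq:Ys} gives $Y_3\ge Y_4\ge Y_5\ge\|qt\|_{\hex}>y_0$. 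Hence $x_2,x_3,x_4>y_0$.

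Now \eqref{eq:Xs} yields
$$\|x^1t\|_{\hex}\le X_1\le 1-x_0-x_5-x_4-x_3-x_2<1-3y_0\le 1-y_0,$$
which contradicts the hypothesis. This is the paper's argument: the empty triangles of the $x$-path's long edges squeeze the $y$-path inward, and since the $y$-path must keep hex distance above $y_0$ in $C_t^5$, each of those squeezes costs the $x$-path more than $y_0$ of progress before it can reach $C_t^1$.
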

\begin{subproof}
    Suppose on the contrary that $\|x^1t\|_{\hex}>1-y_0$. See Figure~\ref{fig:x1y0}. Since greedy paths decrease in $\|\cdot\|_{\hex}$-norm to their destination, then $\|y^it\|_{\hex}$ is monotonically decreasing, whereas $\|x^it\|_{\hex}$ is monotonically increasing for $i\in\{1,2,3,4,5\}$. This means that $\|y^it\|_{\hex}\leq \|y^1t\|_{\hex}\leq 1-y_0 <\|x^1t\|_{\hex}\leq \|x^it\|_{\hex}$. In particular, the empty regions $ C_{t}^i\cap C_{x^i}^{i+2}$ for $i\in\{2,3,4\}$ imply that $Y_3\leq x_2$, $Y_4\leq x_3$, and $Y_5\leq x_4$. Assumption~\ref{asm:q} implies $Y_5\geq y_0$, therefore (\ref{eq:Ys}) implies $\min(x_2,x_3,x_4)\geq y_0$. However this yields $1-y_0< \|x^1t\|_{\hex}< 1-x_0-x_5-x_4-x_3-x_2\leq 1-3y_0$, which implies $y_0<0$. This is a contradiction, hence we may assume $\|x^1t\|_{\hex}\leq 1-y_0$ for the rest of the proof.
\begin{figure}[ht!]
    \centering
    \includegraphics[page=84, scale=1]{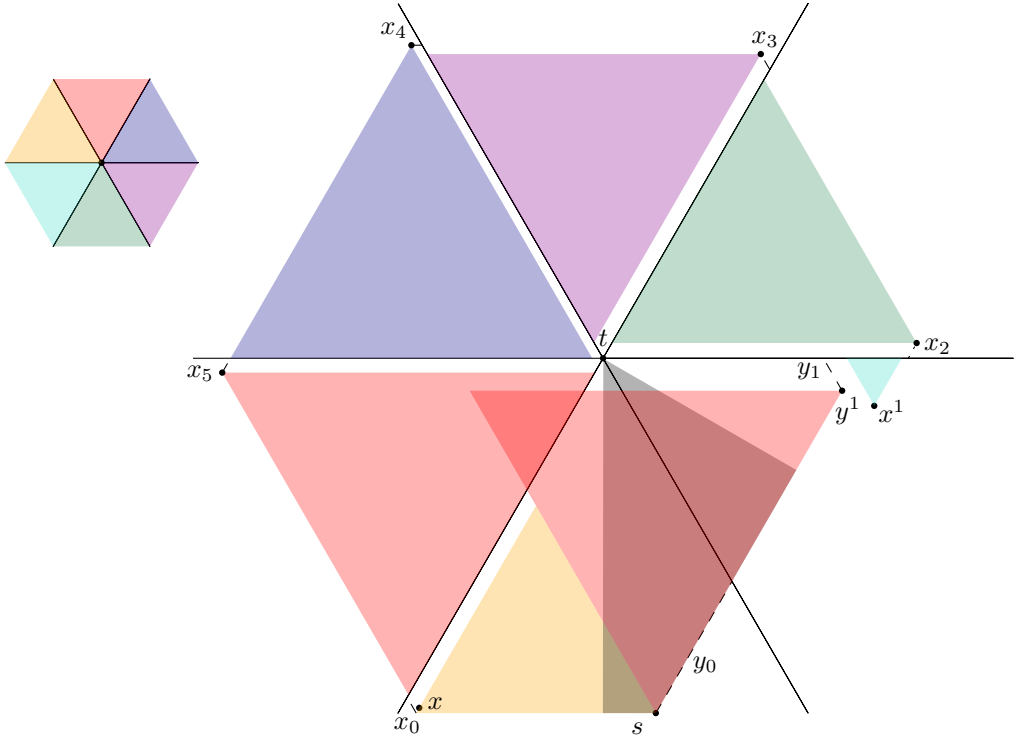}
    \caption{Observation~\ref{obs:x1y0}: From Assumption~\ref{asm:q}, there cannot exist $x^1$ such that $\|x^1t\|>1-y_0$. \label{fig:x1y0}}
\end{figure}
\end{subproof}
\begin{obs}\label{obs:noT}
There is no edge $uv$ on $\pi(x,t)_{v^x}$ with $u\in  C_t^5$ and $v\in T$.
\end{obs}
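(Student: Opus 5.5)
The plan is a proof by contradiction. Suppose some edge $uv$ of $\pi(x,t)_{v^x}$ has $u\in C_t^5$ and $v\in T$. The goal is to show that we then have a sufficiently short path from $s$ to $t$, or else that Assumption~\ref{asm:x} or Assumption~\ref{asm:q} fails.

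First, since $uv$ is an edge of the greedy path toward $t$ with $u\in C_t^5$ and $v\in C_t^0$, the head $v$ is the closest vertex to $u$ in its cone, measured by $\|\cdot\|_{\hex}$. Since $v\in T=C_t^0\cap C_s^2$, the vertex $v$ lies on the same side of $R_s^1\cup R_s^4$ as $t$. Hence $\|vt\|_{\hex}\leq y_0$.

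Second, I would use the spiral structure of the $x$-path. It starts at $x\in C_t^0$ and travels clockwise through $C_t^5,C_t^4,\dots$. By Lemma~\ref{lem:adj}, to return to $C_t^5$ and then enter $C_t^0$ before crossing $R_t^0$, it must pass around $t$ on the $C_t^5$ side. The step $C_t^5\to C_t^0$ here is a clockwise move into $C_t^0$ across $R_t^5$. Before reaching $u$, the path must have gone through $C_t^4,C_t^3,C_t^2,C_t^1$ (else it would have crossed $R_t^0$ and been truncated), or else $u$ is already in the first visit to $C_t^5$.

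In the latter case, $u$ is reached directly from $x$ via $C_t^0\to C_t^5$. Then $\|ut\|_{\hex}\le \|xt\|_{\hex}\le 1$, and the greedy edge $uv$ lands in $T$ near $t$. I would bound the path $s,x,\dots,u,v$ by Lemma~\ref{lem:greedyPotential} as $\|sx\|_{\hex}+X_0+X_5$, and then apply induction at $v$ with $5\|vt\|_{\thex}\le 5y_0$. Using $\|sx\|_{\hex}\le 1-y_0-x_0$, this gives a length at most roughly $3+4y_0$, which is less than $5-2.5y_0$ because $y_0<0.4$ by \eqref{eq:firstAssumption}.

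In the former case the $x$-path wraps fully around $t$. I would then compare with the $y$-path. The vertex $y\in C_t^1$ has $\|yt\|_{\hex}$ close to $1-y_0$, and $\pi(y,t)$ visits $C_t^5$ at a vertex $q$ with $\|qt\|_{\hex}>y_0$ (Assumption~\ref{asm:q}). Meanwhile the $x$-path in $C_t^5$ at $u$ must have $\|ut\|_{\hex}$ comparable to or smaller than the empty regions $C_t^i\cap C_{y^i}^{i-2}$ left by the $y$-path. As in the proof of Observation~\ref{obs:x1y0}, the $x$-path is forced strictly inside the $y$-path's empty triangles. This gives $\|ut\|_{\hex}\le Y_5$-type bounds while Assumption~\ref{asm:q} gives $Y_5>y_0$.

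The key geometric step is to exploit the empty triangle $\nabla_u^v$. This triangle has $v\in T$ on its side, while $u\in C_t^5$. I would argue that $\nabla_u^v$ contains $C_t^0\cap C_{v}^4$, or at least the portion needed. This would give a vertex $v\in T$ with $\text{int}(C_t^0\cap C_v^4)\cap P=\emptyset$, directly contradicting Assumption~\ref{asm:x}. Concretely, $u$ lies to the right of $R_t^5$ and $v$ is the $\|\cdot\|_{\hex}$-nearest vertex in $C_u^3$ (or $C_u^4$). The region $C_t^0\cap C_v^4$ lies between $v$ and the ray $R_t^5$ at heights below $v$. I would check that every point in it is $\hex$-closer to $u$ within the same cone of $u$, hence lies in $\nabla_u^v$.

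The main obstacle is this containment, since it depends on which cone of $u$ contains $v$ and on the relative heights of $u$ and $v$. I would first handle $v\in C_u^3$ with $u$ above $v$, where the containment appears direct. For the remaining configurations, I would fall back on the length argument of the first case, combined with Lemma~\ref{lem:WithinR}, to charge $\|uv\|_{\hex}$ against the decrease in $\|\cdot\|_{\thex}$ toward $t$.
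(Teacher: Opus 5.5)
Your ``key geometric step'' is exactly the paper's whole proof: the empty triangle $\nabla_u^v$ contains $C_t^0\cap C_v^4$, so the vertex $v\in T$ violates Assumption~\ref{asm:x}. The gap is that you never prove this containment. You also put $v$ in the wrong cone of $u$, call the containment the main obstacle, and fall back on arguments that do not work.

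Since $u\in C_t^5$, the target $t$ lies in $C_u^2$. Because $uv$ is an edge of the greedy path $\pi(x,t)$, this forces $v\in C_u^2$, never $C_u^3$ or $C_u^4$. (Your ``$v\in C_u^3$ with $u$ above $v$'' is self-contradictory.) So there is only one configuration, $v$ lies above $u$, and the containment is a short half-plane check:
\begin{itemize}
\item $\nabla_u^v$ is the intersection of three half-planes: on or above the horizontal line through $u$, on or right of the slope-$\sqrt{3}$ line through $u$, and on or left of the slope-$(-\sqrt{3})$ line through $v$.
\item $C_t^0\cap C_v^4$ is the triangle $vw_1w_2$, where $w_1$ and $w_2$ are the points where the leftward horizontal ray and the up-left slope-$(-\sqrt{3})$ ray from $v$ meet $R_t^5$.
\item This triangle is above $u$, since its lowest points are at the height of $v$.
\item It is right of the slope-$\sqrt{3}$ line through $u$. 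It lies in $C_t^0$, which is right of the parallel line through $t$ carrying $R_t^5$, and $u\in C_t^5$ is left of that line.
\item It is left of the slope-$(-\sqrt{3})$ line through $v$, since $v$ and $w_2$ lie on that line and $w_1$ is left of $v$.
\end{itemize}
Hence $\text{int}(C_t^0\cap C_v^4)\subseteq\text{int}(\nabla_u^v)$ contains no vertex, which contradicts Assumption~\ref{asm:x} applied to $v$. Nothing more is needed.

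The rest of the proposal should be discarded, because it contains errors you were planning to rely on for the ``remaining configurations'':
\begin{itemize}
\item For $v\in T$ you have $1-y_0\le\|vt\|_{\hex}\le 1$, not $\|vt\|_{\hex}\le y_0$. $T$ is the small triangle next to $s$, far from $t$. Induction at $v$ is then nearly as costly as at $s$ (already $5\|vt\|_{\thex}\ge 5-7.5y_0$), so your $3+4y_0$ estimate is invalid.
\item An edge from $C_t^5$ into $C_t^0$ crosses $R_t^5$ counterclockwise, reversing the clockwise $x$-path. It is not a clockwise step.
\item Your ``full wrap'' case cannot occur: entering $C_t^0$ from $C_t^1$ crosses $R_t^0$, which ends $\pi(x,t)_{v^x}$ at $v^x$.
\item Your two cases are not exhaustive, since the path may return to $C_t^5$ after a detour into $C_t^4$.
\end{itemize}
The comparison with the $y$-path is likewise unnecessary.
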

\begin{subproof}
    If there were such an edge $uv$ on $\pi(x,t)_{v^x}$ with $u\in  C_t^5$ and $v\in T$, the empty region $\text{int}(\nabla_u^v)$ contradicts Assumption~\ref{asm:x}. See Figure~\ref{fig:cone5toT}.
    \begin{figure}[ht!]
        \centering
        \includegraphics[page=83, scale=1]{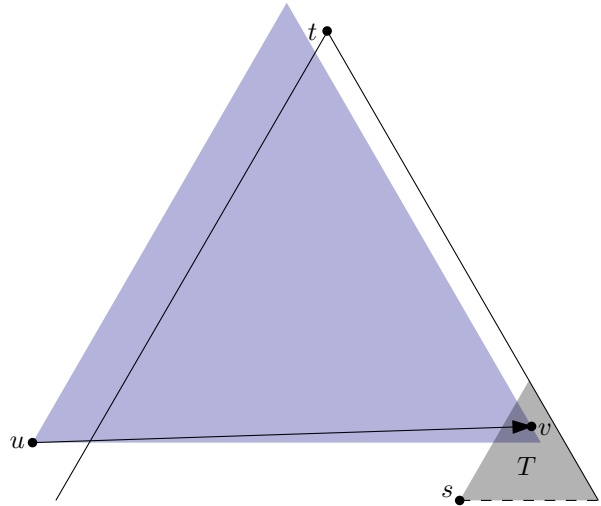}
        \caption{Observation~\ref{obs:noT}: If $uv$ is an edge from $C_t^5$ to $T$, then the region $ C_v^4\cap C_t^0$ contains no vertices in its interior, contradicting Assumption~\ref{asm:x}. \label{fig:cone5toT}}
    \end{figure}
\end{subproof}

Observations~\ref{obs:x1y0} and~\ref{obs:noT} together imply that the edge $u^x v^x$ crosses both $B_t^0$ and $B_t^1$ since the interior of region $R$ is empty by Lemma~\ref{lem:empty}. Therefore Lemma~\ref{lem:RCrossing} applies to $u^x v^x$ as well. Furthermore, we can ensure that $\|v^x t\|_{\hex}\leq x_0$ by the following claim.

\begin{clm}\label{obs:Cone0x0} 
There is no vertex to the left of $B_t^0$ in the interior of $ C_x^2$. 
\end{clm}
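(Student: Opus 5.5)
The plan is to first locate the region in question, then show that any vertex in it either lies in an empty region we already have, or yields a short $s$--$t$ path, contradicting $d_G(s,t)>5\|st\|_{\thex}$.

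Recall that $x\in C_s^4\cap C_t^0$ by Assumption~\ref{asm:x}, and $s$ lies to the right of $B_t^0$. Since $x\in C_t^0$, the line of slope $\sqrt{3}$ through $x$ passes below $t$. Hence the set $S:=\text{int}(C_x^2)\cap\{\text{left of }B_t^0\}$ is bounded by three lines: the horizontal line through $x$, the ray $R_x^2$ of slope $\sqrt{3}$, and the vertical line $B_t^0$. In particular $S\subseteq C_t^0$, and every point of $S$ lies above $x$ and below $t$. Suppose for contradiction that some vertex $w\in P$ lies in $S$. Because $w$ is left of $B_t^0$ and $s$ is right of it, $w$ lies in $C_s^3\cup C_s^4$. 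I would split into cases according to where $w$ falls relative to $R$ and $\nabla_s^x$.

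\emph{Case 1: $w\in C_s^4$.} Then $w$ is at least as high as $x$, so $\|sw\|_{\hex}\ge\|sx\|_{\hex}$. Since $w$ is also to the right of $R_x^2$, the configuration ($w$ above $x$ inside $C_s^4$) places $w$ inside the parallelogram $Q$ or inside $\nabla_s^q$ whenever $w$ lies below the horizontal line through $q$. Both have empty interior by Lemma~\ref{lem:empty}, which gives a contradiction. If instead $w$ is above the horizontal through $q$, then $w$ lies in the small triangle bounded by $B_t^0$, the horizontal through $q$, and $R_t^5$. There I would argue that $w\in Q$ after all, using that $Q$ is exactly the region where $\|sw\|_{\thex}+\|wt\|_{\thex}=\|st\|_{\thex}$. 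Failing that, I would apply the induction trick from Lemma~\ref{lem:empty}: take the path $s\to x$ followed by induction from $x$, and bound $\|sx\|_{\hex}$ by twice the $\thex$-progress using Lemma~\ref{lem:WithinR}.

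\emph{Case 2: $w\in C_s^3$.} Then $w$ lies between $B_t^0$ and $R_s^3$ and below $t$. This region is covered by $Q\cup\nabla_s^p$, whose interiors are empty by Lemma~\ref{lem:empty}, which again gives a contradiction.

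The main obstacle will be the leftover sliver in Case~1: the part of $S$ close to $R_t^5$ that lies outside $\nabla_s^q$. Here emptiness alone may not suffice. For this part I would use the shortcut $d_G(s,t)\le\|sx\|_{\hex}+d_G(x,t)$ together with $d_G(x,t)\le 5\|xt\|_{\thex}$ from induction. Lemma~\ref{lem:hexthex} then relates $\|sx\|_{\hex}$ to the gain $\|st\|_{\thex}-\|xt\|_{\thex}$. This is the same computation as in the $q^0,q^1$ case of Lemma~\ref{lem:empty}, where the existence of $w$ forces $x$ itself to be far from $R_t^5$, and that is what drives the estimate below $5\|st\|_{\thex}$.
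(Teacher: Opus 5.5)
\textbf{There is a genuine gap, and it is in your Case~2.} That case is where the whole difficulty of the claim lies.

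The region $C_s^3\cap C_t^0$ to the left of $B_t^0$ is not covered by $Q\cup\nabla_s^p$. The parallelogram $Q$ lies between $B_t^0$ and the vertical line through $s$, so it contributes nothing to the left of $B_t^0$. The top side of $\nabla_s^p$ is at the height of $p=B_s^3\cap B_t^1$, which is strictly below $t$.

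The larger empty triangle $\nabla_s^y$ does not fix this. Recall $\|sy\|_{\hex}=1-y_1>\|sp\|_{\hex}$ from Lemma~\ref{lem:empty}. Even so, $\nabla_s^y$ leaves uncovered every point $u\in C_s^3\cap C_t^0$ with $\|ut\|_{\hex}\le y_1$. Your set $S$ does reach into this region: its apex on $B_t^0$ tends to $t$ as $x$ tends to $R_t^5$. Nothing established before the claim forbids a vertex there. So the claim cannot be proved by covering $S$ with empty regions. You must use the hypothesis $d_G(s,t)>5\|st\|_{\thex}$ and exhibit a short path.

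The paper does exactly this:
\begin{itemize}
\item $w\in C_s^3$, and the emptiness of $\nabla_s^y$ gives $\|wt\|_{\hex}\le y_1$.
\item Let $u^1$ be the vertex of $T$ closest to $t$, and take the edge $u^1u^2$ in $C_{u^1}^4$, which lands in $C_t^0$ by Assumption~\ref{asm:x}.
\item If needed, take the edge $u^2u^3$ in $C_{u^2}^2$. The presence of $w$ forces its head into $C_s^3$ above $\nabla_s^y$.
\item Induction on $su^1$ and on $u^3t$ yields $5-2.5y_0<5y_0+2+5y_1$, i.e.\ $7.5y_0+5y_1>3$.
\end{itemize}
This contradicts \eqref{eq:firstAssumption}, which gives $7.5y_0+10.5y_1<3$.

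\textbf{Your Case~1 is also handled incorrectly, though there it is harmless.} The inequality goes the other way. The ray $R_x^2$ lies on the far side of $\nabla_s^x$, so $C_x^2\cap C_s^4\subseteq\nabla_s^x$. Hence any vertex $w\in C_s^4\cap\text{int}(C_x^2)$ satisfies $\|sw\|_{\hex}<\|sx\|_{\hex}$. This contradicts that $sx$ is the edge of $s$ in $C_s^4$. So Case~1 is immediate and there is no leftover sliver.

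Your appeal to $Q$ in Case~1 could not have worked anyway, since $w$ is left of $B_t^0$ and $Q$ lies to its right. The fallback you propose there (the edge $sx$ followed by induction at $x$) does not involve $w$ and is not short in general. Lemma~\ref{lem:WithinR} does not apply, because $x$ lies left of $B_t^0$. Moreover, for small $y_0$ with $x$ near the lower-left corner of $\nabla_t^s$, one gets $\|sx\|_{\hex}+5\|xt\|_{\thex}$ close to $6$.

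So you identified the wrong obstacle. The real work is excluding a vertex of $S$ lying in $C_s^3$ within hex distance $y_1$ of $t$, and that requires the path argument above.
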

\begin{subproof}
    Assume there is such a vertex $p\in C_x^2$ to the left of $B_t^0$. Note that we must have $p\in C_s^3$ since $C_x^2\cap C_s^4\subseteq \nabla_s^x$ and $\nabla_s^x$ contains no vertices. This means that $p$ is above $\nabla_s^y$, yielding $\|pt\|_{\hex}\leq y_1$. See Figure~\ref{fig:emptyx0}. 
    \begin{figure}[ht!]
        \centering
        \includegraphics[page=74, scale=1]{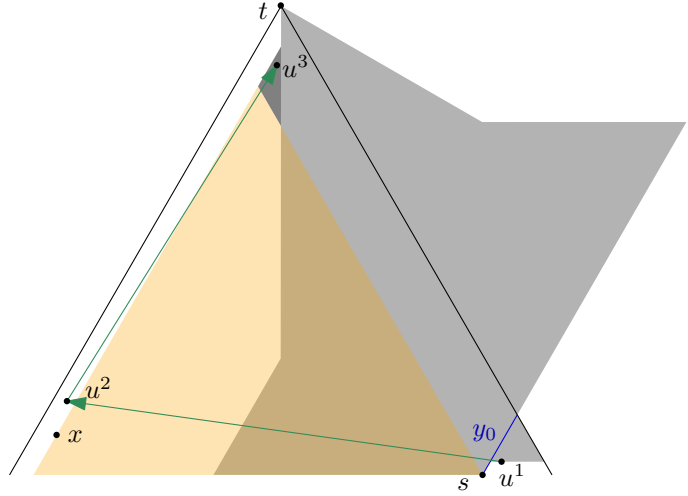}
        \caption{If there is a vertex to the left of $B_t^0$ in $ C_x^2$ (the dark grey region), then we can form a path that passes through $s,u^1,u^2,u^3,t$. Induction is used to bound $d_G(s,u^1)$ and $d_G(u^3,t)$, whereas $u^1u^2$ and $u^2u^3$ are edges in $G$ (in green). \label{fig:emptyx0}}
    \end{figure}
    Let $u^1$ be the vertex of $T:= C_t^0\cap C_s^2$ minimizing $\|u^1t\|_{\hex}$. Then we take the edge $u^1u^2$ in $ C_{u^1}^4$. By Assumption~\ref{asm:x}, we have $u^2\in C_t^0$. If $u^2\in C_x^2$, then set $u^3:=u^2$. Otherwise take the edge $u^2u^3$ in $ C_{u^2}^2$. We must have $u^3\in C_s^3$, hence we apply induction from $u^3$ to $t$. This yields
    \begin{align}
        5-2.5y_0 &<d_G(s,t) \nonumber\\
        &\leq d_G(s,u^1)+\|u^1u^2\|_{\hex}+\|u^2u^3\|_{\hex}+d_G(u^3,t) \nonumber\\
        &\leq 5y_0+1+1+5y_1.\label{eq:emptyx0}
    \end{align}
    However, the system of inequalities (\ref{eq:firstAssumption},\ref{eq:emptyx0}) is infeasible. Therefore we may assume that the region $ C_x^2\cap C_t^0$ to the left of $B_t^0$ is empty for the remainder of the proof.
\end{subproof}
In order to further restrict the set of solutions to our linear system of inequalities towards infeasibility, we will consider several cases with more granularity. We will first consider the direction in which the edge $u^y v^y$ crosses the boundary $R_t^0$ in order to strengthen our upper bound on $\|v^y t\|_{\thex}$. Recall that $5\|v^y t\|_{\thex}$ accounts for the path from $v^y$ to $t$ by induction, hence it is an important quantity to reduce. We will apply the same analysis to $u^x v^x$ to also upper bound $\|v^x t\|_{\thex}$, then lastly we will consider the trade-off between the $x$-path and $y$-path. In Figure~\ref{fig:ExampleT6}, we show an example of the $x$-path and $y$-path that compete for space.
\begin{figure}[ht!]
        \centering
        \includegraphics[page=88, scale=0.8]{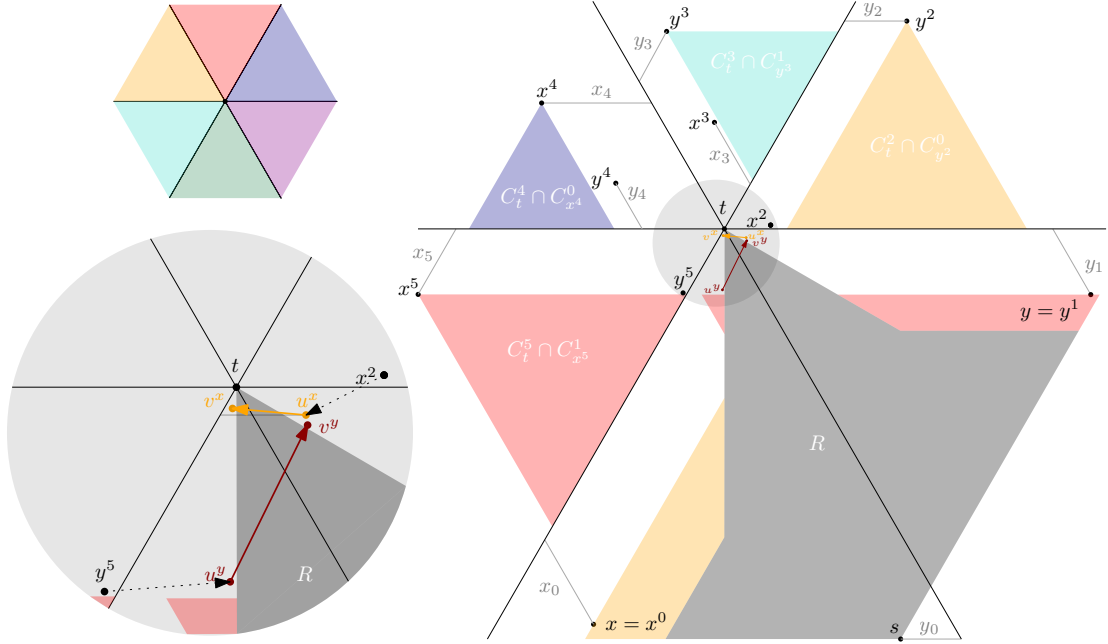}
        \caption{Top Left: The colour coding for edges based on direction. Right: The grey region $R$ contains no vertices in its interior by Lemma~\ref{lem:empty}. The triangular regions are coloured according to their edge direction. This example shows the case when $v^y\in C_t^1$ (case $Y1$), $v^x \in C_t^0$ (case $X0$), and $j=3$ (case J3). Notice that the empty turquoise region $C_t^3\cap C_{y^3}^1$ constrains the position of $x^3$, forcing $x^2$ to be considerably closer to $t$. Lower Left: This enlarged view shows the edges that cross the empty grey region $R$. Notice that the edges $u^xv^x$ and $u^yv^y$ both make considerable progress towards $t$ (quantified in Lemma~\ref{lem:RCrossing}).  \label{fig:ExampleT6}}
    \end{figure}
    
Before proceeding to the cases, we show in Claim~\ref{obs:x3} that the $x$-path must visit the cone $C_t^3$.

\begin{clm}\label{obs:x3}
The path $\pi(x,t)_{v^x}$ must contain a vertex in $ C_t^3$.
\end{clm}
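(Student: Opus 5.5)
We argue by contradiction: suppose no vertex of $\pi(x,t)_{v^x}$ lies in $C_t^3$. By Lemma~\ref{lem:adj}, consecutive vertices of a greedy path lie in the same or adjacent cones of $t$. The path starts at $x\in C_t^0$ and, by definition of $u^xv^x$, stays on the clockwise side of $R_t^0$ until it crosses. Hence the truncated path $\pi(x,t)_{u^x}$ stays in $C_t^0\cup C_t^5\cup C_t^4$, and $u^xv^x$ is the crossing of $R_t^0$. The cones $C_t^1$ and $C_t^2$ can only be reached through $C_t^3$, and that route has been excluded. So in fact $u^x\in C_t^0$ and $v^x\in C_t^1$, and $X_1=X_2=X_3=0$.

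The plan is to bound the candidate path (\ref{eq:XPath}) by at most $5\|st\|_{\thex}=5-2.5y_0$, which contradicts $d_G(s,t)>5\|st\|_{\thex}$.

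\textbf{Step 1: bound the sum of the $X_i$.} By Lemma~\ref{lem:greedyPotential}, the $x$-path has $\|\pi(x,t)_{v^x}\|_{\hex}\le X_0+X_5+X_4$. Inequality (\ref{eq:Xs}) gives $X_0\le 1$, $X_5\le X_0-x_0$ and $X_4\le X_5-x_5$.

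\textbf{Step 2: control $x_0$ and $v^x$.} By Claim~\ref{obs:Cone0x0} and Lemma~\ref{lem:RCrossing}, which applies to $u^xv^x$ by Observations~\ref{obs:x1y0} and~\ref{obs:noT},
\[
5\|v^xt\|_{\thex}\le 2.5\|u^xv^x\|_{\hex}.
\]
The quantity $\|u^xv^x\|_{\hex}$ is already counted in the sum $X_0+X_5+X_4$. However, $u^xv^x$ is an edge sourced in $C_t^0$, whose projection in the proof of Lemma~\ref{lem:greedyPotential} must be disjoint from the projections of the other $C_t^0$ edges. So I would rather use the direct bound $\|v^xt\|_{\thex}\le \|v^xt\|_{\hex}\le x_0$.

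\textbf{Step 3: use the $y$-path to pin down the $x$-path.} The key geometric input is Assumption~\ref{asm:q}: the $y$-path reaches $C_t^5$ at a vertex $q$ with $\|qt\|_{\hex}>y_0$. It travels counterclockwise through $C_t^3$ and $C_t^4$, and its long edges carve the empty triangles $C_t^i\cap C_{y^i}^{i-2}$. The $x$-path, which turns back in $C_t^4$ without entering $C_t^3$, must stay clear of the $y$-path's empty regions. Conversely, its own empty triangles $C_t^i\cap C_{x^i}^{i+2}$ for $i\in\{5,4\}$ must not contain $y^4$ or $y^5$. This interlocking should give inequalities such as $Y_5\le x_4$ and $Y_4\le$ (something involving $x_5$), in the style of the proof of Observation~\ref{obs:x1y0}. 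Together with $Y_5\ge y_0$, (\ref{eq:Ys}) and (\ref{eq:firstAssumption}), they should bound $X_0+X_5+X_4$.

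\textbf{Step 4: close with an LP.} Plug everything into (\ref{eq:XPath}):
\[
5-2.5y_0<(1-y_0-x_0)+X_0+X_5+X_4+5x_0,
\]
and check infeasibility against the collected system, possibly by a small LP. If the naive bound $X_0+X_5+X_4\le 3$ is used, this step fails. The savings must come from the $x_0$, $x_5$, $y_0$ terms and from the fact that the turn in $C_t^4$ happens at small radius.

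\textbf{Main obstacle.} I expect the hard part to be Step 3: extracting a genuinely useful inequality about how deep the $x$-path must turn in $C_t^4$ before returning. Concretely, one must show that the $x$-path, once it is in $C_t^4$ without reaching $C_t^3$, returns to $C_t^5$ only after an edge whose empty triangle constrains $X_5$ for the next visit. This is essentially Lemma~\ref{lem:3from1}-type reasoning adapted to the truncated path. I would first attempt the cleanest version: the edge leaving $C_t^4$ back into $C_t^5$ forces $\|v t\|_{\hex}\le x_4$-type bounds, making the total path length at most about $1+2(1-y_0)$.
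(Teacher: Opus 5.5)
Your reduction (the truncated $x$-path stays in $C_t^0\cup C_t^5\cup C_t^4$, so $u^x\in C_t^0$, $v^x\in C_t^1$, $X_1=X_2=X_3=0$) matches the paper. So do your bounds $X_0\le 1$, $X_5\le 1-x_0$, $X_4\le 1-x_0-x_5$. The gap is the induction term at $v^x$. With only $\|v^xt\|_{\thex}\le x_0$, your Step~4 inequality is satisfiable: take $y_0=0.3$, $y_1=0$, $x_0=0.4$, $x_5=0$. These satisfy (\ref{eq:firstAssumption}) and give a right-hand side $0.3+1+0.6+0.6+2=4.5>4.25=5-2.5y_0$. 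So everything rests on Step~3, which you only conjecture (``should give inequalities such as\dots'') and never derive. As it stands, the proposal does not prove the claim.

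The missing ingredient is the one you discarded in Step~2. Your double-counting worry there is unfounded. The inequality $5\|v^xt\|_{\thex}\le 2.5\|u^xv^x\|_{\hex}$ is just an upper bound on the induction term, and it does not matter that $\|u^xv^x\|_{\hex}$ is also absorbed into $X_0$. What you need is to bound $\|u^xv^x\|_{\hex}$ by one of the LP variables. The paper uses $x_5$, obtaining $\|v^xt\|_{\thex}\le\min(x_0,\tfrac{x_5}{2})$ whenever the $x$-path visits $C_t^5$.

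(If the path does not enter $C_t^4$, this is immediate. Greedy steps toward $t$ from below $t$ move weakly upward, so $u^x$ is no deeper than $x^5$, and hence $\|u^xv^x\|_{\hex}\le\|u^xt\|_{\hex}\le x_5$. The paper states the bound without justification, so the configuration that enters $C_t^4$, where depth is not monotone, is not covered by this argument.)

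With this bound, $5\min(x_0,\tfrac{x_5}{2})\le 3x_0+x_5$ (convex weights $\tfrac35,\tfrac25$). This exactly cancels the $-3x_0-x_5$ contributed by $\|sx\|_{\hex}=1-x_0-y_0$, $X_5$ and $X_4$. What remains is $d_G(s,t)\le 4-y_0<5-2.5y_0$, since $y_0<\tfrac23$. If the path never enters $C_t^5$, then $(1-x_0-y_0)+1+5x_0\le 4-y_0$ with $x_0\le\tfrac12$ already suffices. No interaction with the $y$-path is needed.
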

\begin{subproof}

If the claim did not hold, then $u^x v^x$ would necessarily cross $R$ in the counterclockwise direction. See Figure~\ref{fig:EnterCone3}. 
    \begin{figure}[ht!]
        \centering
        \includegraphics[page=70, scale=1]{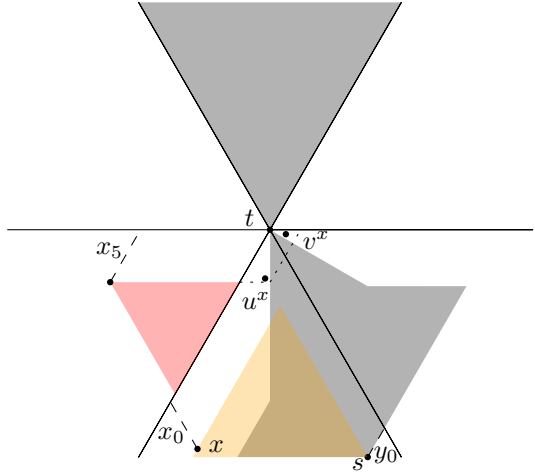}
        \caption{Claim~\ref{obs:x3}: If $\pi(x,t)_{v^x}$ does not enter $C_t^3$, then we can form an infeasible system of linear inequalities. \label{fig:EnterCone3}}
    \end{figure}
    If $\pi(x,t)_{v^x}\cap C_t^5\neq\emptyset$, then we use Lemma~\ref{lem:greedyPotential} and apply induction at $v^x$ to obtain
    \begin{align}
        5-2.5y_0 &< d_G(s,t)\nonumber\\
        &\leq \|sx\|_{\hex}+\|\pi(x,t)_{v^x}\|_{\hex} + d_G(v^x,t) \nonumber\\
        &\leq (1-x_0-y_0) + (X_0 + X_5 + X_4) + 5||v^xt||_{\hex} \nonumber\\
        &\leq (1-x_0-y_0) + 1 + (1-x_0) + (1-x_0-x_5) + 5\min(x_0,\frac{x_5}{2}). \label{eq:v1}
    \end{align}
    However, the linear system (\ref{eq:firstAssumption},\ref{eq:v1}) is infeasible. On the other hand if $\pi(x,t)_{v^x}\cap C_t^5=\emptyset$, then 
    \begin{align*}
        d_G(s,t) &\leq \|sx\|_{\hex}+\|\pi(x,t)_{v^x}\|_{\hex} + d_G(v^x,t) \\
        &\leq (1-x_0-y_0) + X_0  + 5||v^xt||_{\hex} \\
        &\leq (1-x_0-y_0) + 1 + 5x_0\\
        &\leq 4-y_0 &\text{since $x_0\leq 1/2$,}\\
        &\leq 5-2.5y_0 &\text{since $y_0\leq 1/2$.}
    \end{align*}
    which is a contradiction. To summarize, $\pi(x,t)_{v^x}$ must enter $C_t^3$.
\end{subproof}
First, notice that $u^yv^y$ can cross $R_t^0$ either in the clockwise or counterclockwise direction:

\underline{Case Y0: $v^y\in C_t^0$.} Then we must have $u^y\in C_t^1$. We obtain the following two inequalities:
\begin{align}
    ||v^yt||_{\thex} &\leq y_1, && \text{by empty region $\text{int}(\nabla_s^y)$.}\\
    ||v^yt||_{\thex} &\leq \frac{1}{2}\min(y_2,y_3,y_4), && \text{by Assumption~\ref{asm:q} and Lemma~\ref{lem:RCrossing}.}
\end{align}

\underline{Case Y1: $v^y\in C_t^1$.} Then we must have $u^y\in C_t^0$. We obtain
\begin{align}
    ||v^yt||_{\thex} &\leq x_0, && \text{by Claim~\ref{obs:Cone0x0}.}\\
    ||v^yt||_{\thex} &\leq \frac{Y_0}{2}, && \text{by Lemma~\ref{lem:RCrossing}.}
\end{align}
Similarly, $u^xv^x$ can also cross $R_t^0$ in one of two possible ways. 

\underline{Case X0: $v^x\in C_t^0$.} Then we have
\begin{align}
    ||v^xt||_{\thex} &\leq y_1, && \text{by Observation~\ref{obs:x1y0} and empty region $\text{int}(\nabla_s^y)$.}\\
    ||v^xt||_{\thex} &\leq \frac{X_1}{2}, && \text{by Lemma~\ref{lem:RCrossing}.}
\end{align}

\underline{Case X1: $v^x\in C_t^1$.} Then $u^y\in C_t^0$. We have
\begin{align}
    ||v^xt||_{\thex} &\leq x_0, && \text{by Claim~\ref{obs:Cone0x0} and~\ref{obs:noT}.}\\
    ||v^xt||_{\thex} &\leq \frac{1}{2}\min(x_5,x_4,x_3), && \text{by Claim~\ref{obs:x3} and Lemma~\ref{lem:RCrossing}.}
\end{align}

Next, we will derive several useful inequalities that reflect the balance between the two candidate paths. Intuitively, if $\pi(y,t)_{v^y}$ is long, then the empty regions $C_t^i\cap C_{y^i}^{i-2}$ must be large, leaving no room for the other path $\pi(x,t)_{v^x}$. We formalize this trade-off as follows. Let $j$ be the greatest index in $\{2,3,4,5\}$ such that $\| C_t^j\cap C_{y^j}^{j-2}\|>x_j$. For example when $j=3$, see Figure~\ref{fig:ExampleT6}.

\underline{Case: No such $j$ exists.} Then by definition we must have 
\begin{align}
\label{noj}
    Y_{i+1} \leq \| C_t^i\cap C_{y^i}^{i-2}\| \leq x_i \quad \text{for } 2\leq i\leq 5.
\end{align}

\underline{Case: $j\in\{2,3,4,5\}$.} As $j$ is maximal, then $\| C_t^i\cap C_{y^i}^{i-2}\| \leq x_i$ for $j<i\leq 5$. Next, since $\| C_t^j\cap  C_{y^j}^{j-2}\| > x_j$, then we must have $\|y^kt\|_{\hex}>\|x^kt\|_{\hex}$ for $2\leq k\leq j$. By the empty regions $ C_{t}^i\cap C_{y^i}^{i-2}$, we must have 
\begin{align}
\label{aj}
    Y_{i+1} \leq \| C_t^i\cap C_{y^i}^{i-2}\| \leq x_i \quad \text{for }j<i\leq 5,\quad\text{ and }\quad X_{k-1}\leq y_k  \quad \text{for }2\leq k\leq j.
\end{align}
We denote the cases where $j\in\{2,3,4,5\}$ as J2, J3, J4, J5, and observe that for each case, we yield four inequalities from \eqref{aj}.

Under Assumptions~\ref{asm:q} and~\ref{asm:x} we have the following possibility space: $\{\text{Y0, Y1}\}\times\{\text{X0, X1}\}\times \{\eqref{noj},\text{ J2, J3, J4, J5}\}$. In total there are $2\times 2\times 5=20$ possibilities, and in every case, the inequalities \eqref{eq:firstAssumption}, \eqref{eq:Ys}, \eqref{eq:Xs}, \eqref{eq:YPath}, \eqref{eq:XPath} also hold. It can be shown that the linear system of inequalities is infeasible for $16$ of the $20$ possibilities. In Section~\ref{sec:Poss}, we show that for the remaining four cases, we can define a third candidate path which is sufficiently short for a contradiction. These remaining four cases are (X1, Y1, J2), (X1, Y1, J3), (X1, Y1, J4), (X0, Y0, J3).

\section{Remaining Cases for Upper Bound Proof}
 In Section~\ref{sec:notq}, we prove Theorem~\ref{thm:sr} when Assumption~\ref{asm:q} does not hold. Then, Section~\ref{sec:onlyq} proves Theorem~\ref{thm:sr} when Assumption~\ref{asm:q} holds but Assumption~\ref{asm:x} does not. Finally, in Section~\ref{sec:Poss}, we take care of the remaining four cases involved in proving Theorem~\ref{thm:sr} under Assumptions~\ref{asm:q} and~\ref{asm:x}.

\subsection{Proving Theorem~\ref{thm:sr} without Assumption~\ref{asm:q}}\label{sec:notq}
    In this section, we assume that Assumption~\ref{asm:q} does {\bf not} hold. This means that if $q\in \pi(y,t)_{v^y}\cap C_t^5$, then $||\nabla_t^q||\leq y_0$. See Figure~\ref{fig:5ConeMostlyEmpty}.
    \begin{figure}[ht!]
        \centering
        \includegraphics[page=51, scale=1]{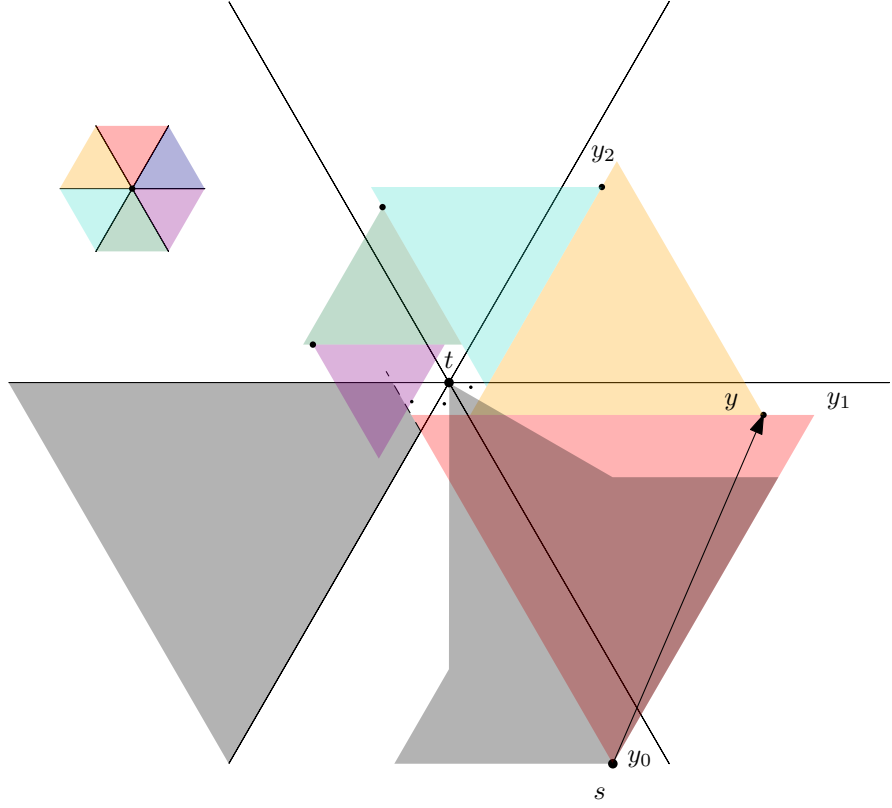}
        \caption{When Assumption~\ref{asm:q} does not hold, then the shaded region in $ C_t^5$ does not contain vertices of $\pi(y,t)_{v^y}$.\label{fig:5ConeMostlyEmpty}}
    \end{figure} We consider the following three cases:
    
    \underline{Case $\pi(y,t)_{v^y}\cap C_t^2=\emptyset$:}  Then $u^yv^y$ crosses from $ C_t^1$ to $ C_t^0$ and $Y_i=0$ for $i\neq 1$. By Lemma~\ref{lem:greedyPotential}, we obtain
    \begin{align}
        5-2.5y_0 &<d_G(s,t) \nonumber\\
        &\leq \|sy\|_{\hex}+\|\pi(y,t)_{v^y}\|_{\hex}+d_G(v^y,t) \nonumber\\
        &\leq (1-y_1)+Y_1+5\|v^yt\|_{\hex} \nonumber\\
        &\leq (1-y_1) + (1-y_0)+5y_1.\label{eq:case1}
    \end{align} 
    It can be verified that the linear system (\ref{eq:firstAssumption},\ref{eq:case1}) is infeasible.
    
    \underline{Case $v^y\in C_t^0$:}
    We can assume $\pi(y,t)_{v^y}$ visits $ C_t^2$, so $y^2$ is well-defined. Recall $y^2\in C_t^2\cap\pi(y,t)_{v^y}$ and $y^2$ maximizes $\| C_{t}^2\cap C_{y^2}^{0}\|$. Also note that we defined $y_2:=\| C_{t}^2\cap C_{y^2}^{4}\|$. Then $\|u^yt\|_{\hex}\leq y_2$, so by Lemma~\ref{lem:RCrossing}, $\|v^yt\|_{\thex}\leq \frac{y_2}{2}$. In addition, $u^y\in C_s^3$, hence we also have $\|v^yt\|_{\hex}\leq y_1$. We can explicitly upper bound $Y_i$ as follows:
    \begin{align}
        5-2.5y_0 &<d_G(s,t) \nonumber\\
        &\leq \|sy\|_{\hex}+\|\pi(y,t)_{v^y}\|_{\hex}+d_G(v^y,t) \nonumber\\
        &\leq (1-y_1)+(Y_1+Y_2+Y_3+Y_4+ Y_5+Y_0)+5\|v^yt\|_{\thex} \nonumber\\
        &\leq (1-y_1) + (1-y_0) + (1-y_0-y_1) + (1-y_0-y_1-y_2) \nonumber\\
        &\phantom{\leq}\quad+ (1-y_0-y_1-y_2) + y_0 + y_1 +5\min(y_1,\frac{y_2}{2}).\label{eq:case1v}
    \end{align}
    It can be verified that the linear system (\ref{eq:firstAssumption},\ref{eq:case1v}) is infeasible.

    \underline{Case $v\in C_t^1$:} 
    Then $u^y\in C_t^0$. Since Assumption~\ref{asm:q} does not hold, then $u^y\in C_s^3$, forcing $\|u^yt\|_{\hex}\leq y_1$. By Lemma~\ref{lem:RCrossing}, we have $\|v^yt\|_{\thex}\leq \frac{y_1}{2}$.
    \begin{align}
        5-2.5y_0 &<d_G(s,t) \nonumber\\
        &\leq \|sy\|_{\hex}+\|\pi(y,t)_{v^y}\|_{\hex}+d_G(v^y,t) \nonumber\\
        &\leq (1-y_1)+(Y_1+Y_2+Y_3+Y_4+Y_5+Y_0)+5\|v^yt\|_{\hex} \nonumber\\
        &\leq (1-y_1) + (1-y_0) + (1-y_0-y_1) + 2(1-y_0-y_1-y_2) + y_0 + y_1 + 5\frac{y_1}{2}.\label{eq:case2v}
    \end{align}
    It can be verified that the linear system (\ref{eq:firstAssumption},\ref{eq:case2v}) is infeasible, concluding this section.

\subsection{Proving Theorem~\ref{thm:sr} using only Assumption~\ref{asm:q}}\label{sec:onlyq}
     This section considers the case when Assumption~\ref{asm:q} holds but Assumption~\ref{asm:x} does not. Then there exists a vertex $u$ in $T:= C_t^0\cap C_s^2$ with no vertices strictly in $ C_u^4\cap C_t^0$. Let $u^0$ be the vertex in $T$ minimizing $\|u^0t\|_{\hex}$. Note that $u^0:=s$ if the interior of $T$ is empty. Let $h:=1-\|u^0t\|_{\hex}$ and $d:=\|su^0\|-h$. See Figure~\ref{fig:dAndh}. Since $u^0\in T$, we have 
    \begin{align}
        d+h\leq y_0.\label{eq:dAndh}
    \end{align}
    
    \begin{figure}[ht!]
        \centering
        \includegraphics[page=77, scale=1]{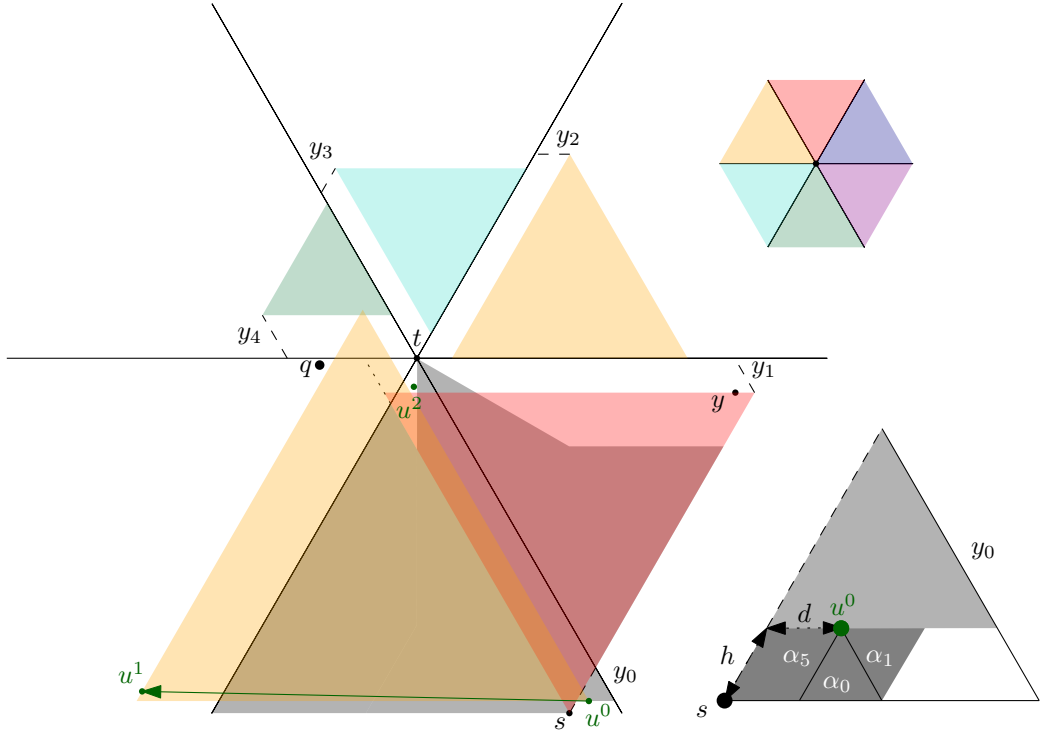}
        \caption{When $ C_s^4\cap C_t^0$ is empty, then we form a path through $s,u^0,u^1,u^2,t$. Definition of $d$ and $h$.\label{fig:dAndh}}
    \end{figure}
    
    We proceed depending on whether the greedy path $\pi(s,u^0)$ enters the cone $C_{u^0}^2$. 
    
    \underline{Case $ C_{u^0}^2\cap\pi(s,u^0)=\emptyset$:} Let $\alpha_i$ be the maximum value of $\|u^0u\|_{\hex}$ over all vertices $u\in\pi(s,u^0)\cap C_{u^0}^i$. Considering the empty regions around $u^0$ we have $\alpha_5\leq h+d, \alpha_0\leq h, \alpha_1\leq h$, and $\alpha_i=0$ for $i\in\{2,3,4\}$. By Assumption~\ref{asm:q}, there must exist $q\in  C_{s}^4\cap \pi(y,t)_{v^y}$. Notice that we must also have $q\in  C_{u^0}^4$ since $\|u^0t\|_{\hex}\geq 1-y_0\geq \|yt\|_{\hex}\geq \|qt\|_{\hex}$. Therefore we may consider the edge $u^0u^1$ in $ C_{u^0}^4$. See Figure~\ref{fig:dAndh}. By defining $m:=||u^0u^1||_{\hex}$ and $p:=R_t^0\cap R_s^1$, we obtain 
    \begin{align}
        (y_0-d) + m &\leq \|pu^0\|_{\hex}+\|u^0 q\|_{\hex}\nonumber &\text{since $||u^0u^1||_{\hex}\leq ||u^0q||_{\hex}$,}\\
        &=\|pq\|_{\hex}\nonumber &\text{by $\|\cdot\|_{\hex}$ collinearity,}\\
        &\leq \|pt\|_{\hex}+\|tq\|_{\hex}\nonumber &\text{by triangle inequality,}\\
        &\leq 1 + (1-y_0-y_1-y_2-y_3-y_4) &\text{since $q\in\pi(y,t)_{v^y}\cap  C_t^5$.}\label{eq:segments}
    \end{align}
    
    Let $u^2$ be the first point of $\pi(u^1,t)$ in $ C_t^0$. Then necessarily $u^2\in C_s^3$, so $\|u^2t\|_{\hex}\leq y_1$. Let $\beta_i$ be the maximum value of $\|ut\|_{\hex}$ over all vertices $u\in\pi(u^1,t)_{u^2}\cap C_t^i$. By Lemma~\ref{lem:greedyPotential}, we have
    \begin{align}
        5-2.5y_0 &<d_G(s,t) \nonumber\\
        &\leq \|\pi(s,u^0)\|_{\hex}+\|u^0u^1\|_{\hex}+\|\pi(u^1,t)_{u^2}\|_{\hex}+d_G(u^2,t) \nonumber\\
        &\leq (\alpha_5+\alpha_0+\alpha_1) + m + (\beta_5+\beta_4+\beta_3+\beta_2+\beta_1) + 5\|u^3t\|_{\hex} \nonumber\\
        &\leq (h+d)+h+h+m+(y_0-h-d+m)+(y_4+y_0-h-d)+y_4+y_3+y_2+ 5y_1.\label{eq:LongSweepPath}
    \end{align}
    However the system (\ref{eq:firstAssumption}, \ref{eq:dAndh}, \ref{eq:segments}, \ref{eq:LongSweepPath}) is infeasible. Notice that \ref{eq:LongSweepPath} holds even if $u^1\in C_t^4$. We re-bracket for clarity:
    \begin{align*}
        5-2.5y_0 &<d_G(s,t)\\
        &\leq \|\pi(s,u^0)\|_{\hex}+\|u^0u^1\|_{\hex}+\|\pi(u^1,t)_{u^2}\|_{\hex}+d_G(u^2,t)\\
        &\leq (\alpha_5+\alpha_0+\alpha_1) + m + (\beta_5+\beta_4+\beta_3+\beta_2+\beta_1) + 5\|u^3t\|_{\hex}\\
        &\leq (h+d)+h+h+m+(y_0-h-d)+(m+y_4+y_0-h-d)+y_4+y_3+y_2+ 5y_1.
    \end{align*}
    This concludes the case when $C_{u^0}^2\cap\pi(s,u^0)=\emptyset$.

    \underline{Case $C_{u^0}^2\cap\pi(s,u^0)\neq\emptyset$:} Then let $v^0$ be the first vertex of $\pi(s,u^0)$ in $C_{u^0}^2$. Next, let $v^1$ be the first vertex of $\pi(v^0,t)$ in $ C_t^0$, and define $T_i$ to be the maximum value of $\|ut\|_{\hex}$ over all vertices $u\in\pi(v^0,t)_{v^1}\cap C_t^i$. Notice that $\|u^0v^0\|_{\hex}\leq h$ and $u^0\in C_t^0$. Therefore $T_i\leq h$ for $i\in\{2,3,4,5\}$. Furthermore, we have $T_1=\|v^0t\|_{\hex}\leq\|v^0 u^0\|_{\hex}+\|u^0t\|_{\hex}\leq h+(1-h)=1$. We have
    \begin{align}
        5-2.5y_0 &<d_G(s,t) \nonumber\\
        &\leq \|\pi(s,u^0)_{v^0}\|_{\hex} + \|\pi(v^0,t)_{v^1}\|_{\hex}+d_G(v^1,t) \nonumber\\
        &\leq (\alpha_5+\alpha_0+\alpha_1)+ (T_1+T_2+T_3+T_4+T_5)+5\|v^1t\|_{\hex} \nonumber\\
        &\leq (h+d)+h+h + 1 + h+h+h+h + 5y_1.\label{eq:DegLongSweep}
    \end{align}
    However the system (\ref{eq:firstAssumption}, \ref{eq:dAndh}, \ref{eq:segments}, \ref{eq:DegLongSweep}) is infeasible. This concludes the proof.

\subsection{Remaining Possibilities under Assumptions~\ref{asm:q} and~\ref{asm:x}}\label{sec:Poss}

Recall from Section~\ref{sec:UpperBound} that we consider the $20$ possibilities from the space $\{\text{Y0, Y1}\}\times\{\text{X0, X1}\}\times \{\eqref{noj},\text{ J2, J3, J4, J5}\}$. In each of these possibilities, the inequalities (\ref{eq:firstAssumption}, \ref{eq:Ys}, \ref{eq:Xs}, \ref{eq:YPath}, \ref{eq:XPath}) hold. However, only $16$ of the $20$ possibilities have infeasible systems. We therefore consider the remaining four cases more carefully: (X1, Y1, J2), (X1, Y1, J3), (X1, Y1, J4), (X0, Y0, J3).

\underline{Case (X1, Y1, J2):} First, if $||v^yt||_{\hex}\leq \frac{y_1}{2}$, then the system is infeasible since $||v^yt||_{\thex}\leq ||v^yt||_{\hex}$. Otherwise, Lemma~\ref{lem:RCrossing} implies $||u^yt||_{\hex}>y_1$, meaning that $u^y\in C_s^4\cap\in C_y^5$. See Figure~\ref{fig:X1Y1case}.
\begin{figure}[ht!]
    \centering
    \includegraphics[page=75, scale=1]{DirTh6.pdf}
    \caption{Here we illustrate the case X1,Y1 when $||u^yt||_{\hex}>y_1$. This implies $u^y\in C_y^5$, which allows a shortcut to be formed through vertices $s,u^0,u^1,u^2,u^3,t$. \label{fig:X1Y1case}}
\end{figure}
Let $\ell:=\| C_s^4\cap C_{u^y}^2\|$. Since $u^y v^y$ is an edge and $u^y\in C_t^0$, we must have 
\begin{align}
    Y_0\geq\|u^yv^y\|_{\hex}\geq y_0+\ell+\frac{||v^yt||_{\hex}}{2}\geq y_0+\ell+\frac{||v^yt||_{\thex}}{2}.\label{eq:Y0}
\end{align} 
Now we describe how $u^y$ provides a shortcut to $t$. Let $u^0$ be the vertex in the triangle $ C_{u^y}^1\cap C_s^5$ that minimizes $||u^0u^y||_{\hex}$. If this triangle is empty, then $u^0:=s$. Notice that $||su^0||_{\hex}\leq \ell$. Therefore we use induction to find a path from $s$ to $u^0$, then follow the edge in $ C_{u^0}^3$ to a vertex $u^1$. Notice that $u^1$ cannot be in $ C_t^5$ from the empty region of edge $u^yv^y$. If $\|u^1t\|_{\hex}\leq x_0$, then apply induction (below set $u^2:=u^1,u^3:=u^1$). Otherwise if $u^1\in C_t^0$, then let $u^2:=u^1$. If instead $u^1\in C_t^1$, then follow $ C_{u^1}^5$ to a vertex $u^2\in C_t^0$. Notice that $u^y$ guarantees the existence of $u^2$ and the empty region from edge $u^yv^y$ guarantees $u^2\in C_t^0$. Finally, from $u^2$, follow $\pi(u^2,t)$ until reaching a vertex $u^3\in C_t^1$. Again, the empty triangle $\nabla_{u^y}^{v^y}$ guarantees that $u^3\in C_t^1$. We have 
\begin{align}
    5-2.5y_0 &< d_G(s,t) \nonumber\\
    &\leq d_G(s,u^0)+\|u^0u^1\|_{\hex}+\|u^1u^2\|_{\hex}+\|\pi(u^2,t)_{u^3}\|_{\hex}+d_G(u^3,t) \nonumber\\
    &\leq 5\ell + (1+\ell) + (1+\ell) +(y_0+\ell+x_0) + 5x_0.\label{eq:short1}
\end{align}
It can be verified that the system of case (X1, Y1, J2) is infeasible with the additional inequalities (\ref{eq:Y0}) and (\ref{eq:short1}) in the presence of (\ref{eq:firstAssumption}, \ref{eq:Ys}, \ref{eq:Xs}, \ref{eq:YPath}, \ref{eq:XPath}).

\underline{Case (X1, Y1, J3):} We use the exact same argument as for (X1, Y1, J2). A careful scan of the proof reveals that the argument only uses Y1 for structure.

\underline{Case (X1, Y1, J4):} Again, we use the exact same argument as for (X1, Y1, J2).

\underline{Case (X0, Y0, J3):} We will describe a symmetric argument to the previous three. First, if $||v^xt||_{\hex}\leq \frac{x_0}{2}$, then the system is infeasible since $||v^xt||_{\thex}\leq ||v^xt||_{\hex}$. Otherwise, we must have $||u^xt||_{\hex}>x_0$, meaning that $u^x\in C_x^2$. See Figure~\ref{fig:X0Y0case}.
\begin{figure}[ht!]
    \centering
    \includegraphics[page=76, scale=1]{DirTh6.pdf}
    \caption{Here we illustrate the case X0,Y0 when $||u^xt||_{\hex}>x_0$. This implies $u^x\in C_x^2$, which allows a shortcut to be formed through vertices $s,u^0,u^1,u^2,u^3,t$. \label{fig:X0Y0case}}
\end{figure}
Let $\ell:=\| C_t^1\cap  C_{u^x}^5\|$. Since $u^x v^x$ is an edge and $u^x\in C_t^1$, we must have 
\begin{align}
    X_1\geq\|u^xv^x\|_{\hex}\geq \ell+\frac{||v^xt||_{\hex}}{2}\geq \ell+\frac{||v^xt||_{\thex}}{2}.\label{eq:X1}
\end{align} 
Now we describe how to use $u^x$ as a shortcut to get to $t$. Let $u^0$ be the vertex in the triangle $ C_{u^x}^0\cap C_s^2$ that minimizes $||u^0u^x||_{\hex}$. If this triangle is empty, then $u^0:=s$. Notice that $||su^0||_{\hex}\leq \ell+y_0$. Therefore we use induction to find a path from $s$ to $u^0$, then follow the edge in $ C_{u^0}^4$ to a vertex $u^1$. Notice that we cannot have $u^2\in C_t^5$ by Assumption~\ref{asm:x} and we cannot have $u^2\in C_t^2$ by the empty region of $u^xv^x$. If $\|u^1t\|_{\hex}\leq y_1$, then apply induction (below set $u^2:=u^1,u^3:=u^1$). If instead $u^1\in C_t^1$, then let $u^2:=u^1$. If instead $u^1\in C_t^0$, then follow $ C_{u^1}^2$ to a vertex $u^2\in C_t^1$. Finally, from $u^2$, follow $\pi(u^2,t)$ until reaching a vertex $u^3\in C_t^0$. By the empty triangle $\nabla_{u^x}^{v^x}$, vertex $u^3$ is well-defined. We have 
\begin{align}
    5-2.5y_0 &< d_G(s,t) \nonumber\\
    &\leq d_G(s,u^0)+\|u^0u^1\|_{\hex}+\|u^1u^2\|_{\hex}+\|\pi(u^2,t)_{u^3}\|_{\hex}+d_G(u^3,t) \nonumber\\
    &\leq 5(\ell+y_0) + (1+\ell) + (1+\ell) +(\ell+y_1) + 5y_1.\label{eq:short2}
\end{align}
It can be verified that the system of case (X0, Y0, J3) is infeasible with the additional inequalities (\ref{eq:X1}) and (\ref{eq:short2}) in the presence of (\ref{eq:firstAssumption}, \ref{eq:Ys}, \ref{eq:Xs}, \ref{eq:YPath}, \ref{eq:XPath}).

\bibliography{DT6}

\end{document}